\newcommand{\be}{\begin{eqnarray}}
\newcommand{\ee}{\end{eqnarray}}
\newtheorem{theorem}{Theorem}[section]
\newtheorem{lemma}[theorem]{Lemma}
\newtheorem{proposition}[theorem]{Proposition}
\newtheorem{corollary}[theorem]{Corollary}
\begin{document}

\title{{\bf A low temperature analysis\\ of the boundary driven Kawasaki Process}}

\author{Christian Maes and  Winny O'Kelly de Galway\\
Instituut voor Theoretische Fysica, KU Leuven}

\maketitle
\begin{abstract}
Low temperature analysis of nonequilibrium systems requires finding the states with the longest lifetime and that are most accessible
from other states.  We determine these {\it dominant} states for a one-dimensional diffusive lattice gas subject to exclusion and with nearest neighbor interaction.  They do not correspond to lowest energy configurations even though the particle current tends to zero as the temperature reaches zero.  That is because the dynamical activity that sets the effective time scale, also goes to zero with temperature.
The result is a non-trivial asymptotic phase diagram, which crucially depends on the interaction coupling and the relative chemical potentials of the reservoirs.
\end{abstract}

\section{Introduction}
The characterization of a macroscopic system of fixed volume and in
thermodynamic equilibrium with a unique heat bath at a given temperature
and chemical potential proceeds from the study of its (grand-canonical)
free energy functional.
At low temperatures energy considerations dominate and the phase diagram
starts from identifying the ground states upon which small thermal
excitations are built and entropic considerations enter. For equilibrium
circumstances then, following the important work in equilibrium statistical mechanics
around 1960-1990, a systematic low temperature analysis has
evolved into a constructive tool, establishing phase transitions and
enabling characterizations of low temperature phases; see 
\cite{minl,dobr,pir,bric3,sin,dom} for some few pioneering examples in the mathematical physics literature.\\

In contrast, low temperature analysis for nonequilibrium systems is
virtually non-existent, at least from a global perspective.  Much
has of course to do with the lack of general principles and with the great
mathematical difficulties in treating spatially extensive processes
under steady nonequilibrium driving.
Recent years have however seen various exactly solvable nonequilibrium
processes very much including some driven diffusive lattice gases \cite{shu,ex,der1,der2,smit}, and various ideas have been launched on the
relevant large deviation theory for nonequilibria.  In particular, a low
temperature analysis for stochastic processes is mathematically very close to what is done in
Freidlin-Wentzel theory for random perturbations of deterministic dynamics.  One must {\it simply} add the nonequilibrium physics and the
relevant examples.  That was part  of the recent paper
\cite{heatb}, where a  scheme was put forward to characterize the
low temperature asymptotics of continuous time jump processes under the
condition of local detailed balance.  The present paper starts from that
same framework to characterize the low temperature stationary condition
of a one-dimensional boundary driven Kawasaki
dynamics.  It is the natural {\it finite temperature} analogue and
extension of the boundary driven symmetric exclusion process.  Particle
reservoirs at the edges of a (large) lattice interval send particles to and receive
particles from the system.  In the bulk, particles are conserved
and hop to nearest neighbor sites following a heat bath dynamics. 
Because the particle reservoirs work at different chemical potentials, a
particle current can be maintained though the system.  Very little is
known about the stationary distribution of the particle configurations and of course the usual Gibbs formalism no longer applies. The low temperature Kawasaki dynamics has been investigated for various reasons, e.g. recently in two dimensions in \cite{land} for tunneling behavior, or for metastability \cite{luck,hol},  for nucleation \cite{bov}, in studies of the spectral gap \cite{mart} etc. but all mostly at detailed balance, \cite{kawa}.\\
In the present paper we break detailed balance.  We start by proving that, asymptotically for very low
temperatures and for positive versus negative chemical potentials at the
edges, the dominant configurations are those that segregate
particles and holes when there is even a small attractive potential between the particles.  Both the current and the dynamical activity go to
zero exponentially fast in the inverse temperature.  That stands in contrast with the case for zero coupling (pure
exclusion dynamics) where the stationary distribution remains concentrated on all possible configurations and a current does of course flow.  We discuss the dominant low temperature attractors and analogous results are exposed also for other parameter values.\\

  The model and the main results will be presented more precisely in the
next section. The discussion of the results is continued in Section \ref{disc}. In Section \ref{kir} we explain what we need from
\cite{heatb}, in particular the set-up of the low temperature asymptotics.  Next, in Section \ref{prof} is contained the detailed
proofs of all results. One should realize here also that mathematical analysis is helpful especially as convincing numerical simulations
become very difficult for larger sizes of the system at very low temperatures. We end in Section \ref{sse} with the proof for the boundary driven exclusion process, that there all configurations are dominant in contrast with the case for weak interaction.

\section{The Kawasaki model and main result}
Consider the lattice interval $I_N = \{1,2,\ldots,N\}$ with each site $i,j,\ldots \in I_N$ either occupied by one particle or left vacant. The particles are treated as indistinguishable so that the configuration space is $\mathcal{G}_N=\{0,1\}^N$. Configurations are denoted by $x,y,z,\ldots \in \mathcal{G}_N$ and $x(i) \in \{0,1\}$ stands for the number of particles at site $i$. \newline 
We take a nearest neighbor interaction between the particles of the form
\be\label{ene}
E(x)= -\kappa \sum_{i=1}^{N} x(i)x(i+1),
\ee
When the coupling $\kappa>0$ the particles attract each other, $\kappa >0$ makes the interaction repulsive and $\kappa=0$ will correspond to the simple exclusion process. Note that for $\kappa\neq 0$ the particle--hole symmetry is (in general) broken.\newline
The dynamics is composed of two parts, nearest neighbor hopping of particles in the bulk and creation or annihilation at the boundaries of $I_N$. We denote by $x^{i,j}$ the configuration obtained from $x$ by interchanging the occupation at $i$ and $j$:
\[
x^{i,j}(k) = %
\begin{cases}
x(k) & \text{if } k\neq i, k\neq j; \\
x(i) & \text{if } k=j; \\
x(j) & \text{if } k=i
\end{cases}
\]
The only allowed such exchanges are between nearest neighbors $j=i\pm 1$.  Their rate  is taken as
\begin{equation}\label{dif}
k(x\rightarrow x^{i,j})=\exp\left[-\frac{\beta}{2}\left(E(x^{i,j})-E(x)\right)\right],\quad \vert i-j\vert = 1
\end{equation}
Note that this particular choice of rates is rather arbitrary up to the natural (local detailed) condition that
\[
\log \frac{k(x\rightarrow x^{i,j})}{k(x^{i,j}\rightarrow x)} = \beta\, \left(E(x)-E(x^{i,j})\right)
\]
is the entropy flux towards the environment due to the bulk occupation exchange  $x\rightarrow x^{i,j}$.\newline
For the boundary sites $i=1,N$ we denote by $x^i$  the configuration obtained from $x$ by flipping the occupation:
\[
x^i(k)  = %
\begin{cases}
1-x(i) & \text{if } k=i; \\
x(k) & \text{if } k\neq i
\end{cases}
\]
The rates of birth and death of particles at $i=1,N$ is then written as 
\begin{equation}\label{rea}
k(x\rightarrow x^i)= e^{\frac{\beta \mu_i}{2} (1-2x(i))}\exp\left[-\frac{\beta}{2}\left(E(x^{i})-E(x)\right)\right]
\end{equation}
so that the ratio
\begin{equation}\label{tdi}
\log \frac{k(x\rightarrow x^i)}{k(x^i\rightarrow x)} = \beta \mu_i \big({\cal N}(x^i) -{\cal N}(x)\big) + \beta \big(E(x) -E(x^i)\big), \quad i=1,N
\end{equation}
equals the entropy flux to the left ($i=1$) or right ($i=N$) particle reservoir imagined with chemical potential $\mu_1$ respectively $\mu_N$, and particle number ${\cal N}(x) := \sum_j x(j)$; in particular, ${\cal N}(x) -{\cal N}(x^i)= 2x(i)-1$. We repeat however that also here other choices than \eqref{rea} give that same thermodynamic interpretation but they would present another kinetics which, for nonequilibrium, does matter. For instance, the rate for annihilation could be fixed at one, independent of temperature, which would change the time scale at which the transition happens compared with \eqref{rea}, but by suitable changes in the creation rates,  that would remain fully compatible with \eqref{tdi} and its thermodynamic interpretation. Much more than in equilibrium therefore we expect non-universal behavior also at the critical zero-temperature.\\

The above dynamics defines an irreducible Markov process $X_t$ on $\mathcal{G}_N$ with unique stationary distribution $\rho = \rho_{N,\beta,\mu_1,\mu_N,\kappa}$.  It is the boundary driven Kawasaki dynamics that is the main subject of this paper.  For $\kappa=0$ the model is known as the boundary driven simple exclusion process for which the matrix product representation gives full control of the $\beta\uparrow +\infty$ limit of the stationary regime, \cite{der1,der2}.  We will use it in Section \ref{sse}. Another solvable case occurs when $\mu_1=\mu=\mu_N$, for equal chemical potentials.  Then, the stationary regime is in fact an equilibrium regime with stationary
distribution given by the grand-canonical Gibbs distribution
\[
\rho^{\text eq}(x) = \rho_{N,\beta,\mu_1=\mu_N=\mu,\kappa}(x) = \frac 1{\cal Z}\exp \big(\beta\mu{\cal N}(x) - \beta E(x)\big)
\]
It is easy to check that $\rho^{\text eq}$ is a reversible distribution for the dynamics \eqref{dif}--\eqref{rea} when $\mu_1=\mu=\mu_N$, as expressed in the (global) detailed balance relation
\[
\frac{\rho^{\text eq}(x)}{\rho^{\text eq}(y)} = \exp[\mu\beta\big({\cal N}(x)-{\cal N}(y)\big) -\beta\big(E(x)-E(y)\big)]= \frac{k(y\rightarrow x)}{k(x\rightarrow y)}
\]

\subsection{Main result}
For the boundary driven Kawasaki process defined above we investigate the large $\beta$ behavior of $\rho_{N,\beta,\mu_1,\mu_N,\kappa}$ for various choices of the other parameters.  We assume physically that the low temperature variation  of the chemical potentials of the reservoirs is zero; so that we can keep $\mu_{1,N}\equiv\mu_{L,R}$ constant. Note that these are multiplied with $\beta$ in \eqref{rea} so that we effectively get to deal with either births or deaths at the edges.  Similarly, the interaction coupling $\kappa$ is also thought to be temperature independent.\\
  A first interesting case concerns an attractive potential ($\kappa > 0$) when the left and right chemical potentials have a different sign, 
say $\mu_1\equiv \mu_L >\kappa >0 >\mu_R \equiv\mu_N$, $\vert\mu_R\vert >\kappa$. It is tempting to think that in the $\beta\uparrow +\infty$-limit, the distribution settles to be uniform over the ground states of the equilibrium lattice gas with energy \eqref{ene} and with boundary conditions $x(1)=1,x(N)=0$.  Independent of the question why energy alone would be decisive for nonequilibrium stationary distributions, that is in fact not entirely correct (and entirely wrong for $\kappa=0$). The dominant low temperature configurations are of the form $x = \eta_{p,q} := (1,1,1,\ldots,1,0,0\ldots,0)$ with $p\geq 3$ occupied sites followed by $q\geq 2$ vacant sites.\\

To give a precise sense to the low temperature asymptotics, we 
introduce the notation
$f(\beta) \asymp e^{\beta h}$ for $\lim_{\beta\rightarrow\infty}\frac{1}{\beta}\log f(\beta) =
 h$.  The states $x$ with $\rho(x)\asymp 1$ are called {\it dominant}.

\begin{theorem}\label{thm:thm1}
For $N\geq 5$ and with $\mu_L >\kappa >0 > \mu_R, |\mu_R|> \kappa$, 
\begin{equation}
\rho(x)\asymp 1 \mbox{ iff } x= \eta_{{p,q}}
\end{equation}
for some $p\geq 3, q\geq 2, p+q=N$.   For all other $x\in {\mathcal G}_N, \rho(x) \asymp e^{-\beta \alpha}$, for some $\alpha>0$.
\end{theorem}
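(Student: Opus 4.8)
The plan is to run the low‑temperature scheme recalled in Section \ref{kir}. Writing the transition rates as $k(x\to y)\asymp e^{-\beta\, c(x\to y)}$ with the one‑step cost $c(x\to y):=-\lim_{\beta\to\infty}\beta^{-1}\log k(x\to y)$, the Markov‑chain‑tree representation of $\rho$ together with Laplace asymptotics gives $\rho(x)\asymp e^{-\beta\, W(x)}$, where $W(x)$ is the minimum of $\sum_{e\in T}c(e)$ over directed spanning trees $T$ oriented into $x$ (normalised by subtracting $\min_y W(y)$), so that the dominant states are exactly the minimisers of $W$. Two structural features of the rates \eqref{dif}--\eqref{rea} are used throughout: $c$ is antisymmetric, $c(y\to x)=-c(x\to y)$, which lets one pass between trees rooted at different states by reversing a path and bookkeeping a single boundary term; and on each of the three kinds of edge — bulk exchange, birth/death at site $1$, birth/death at site $N$ — the cost $c$ is the lattice gradient of $\tfrac12E$, of $\tfrac12(E-\mu_L\mathcal N)$, and of $\tfrac12(E-\mu_R\mathcal N)$ respectively, the only obstruction to a global potential being the mismatch $\mu_L\neq\mu_R$.

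Next I would tabulate the one‑step costs, a finite local computation: a bulk exchange has cost $-\tfrac\kappa2,0,\tfrac\kappa2$ according to whether it raises, preserves or lowers the number of occupied nearest‑neighbour bonds; under $\mu_L>\kappa>0$, injection at site $1$ costs $-\tfrac12(\mu_L+\kappa\,x(2))<0$ and the reverse death $+\tfrac12(\mu_L+\kappa\,x(2))>0$; under $\mu_R<0$, $|\mu_R|>\kappa$, extraction at site $N$ costs $\tfrac12(\mu_R+\kappa\,x(N-1))<0$ and the reverse birth $-\tfrac12(\mu_R+\kappa\,x(N-1))>0$. Hence the descending moves ($c\le0$) are precisely: exchanges that merge particles or holes into blocks, injection at the left edge, and extraction at the right edge, and every strictly positive cost is bounded below by $\delta_0:=\tfrac12\min(\kappa,|\mu_R|-\kappa)>0$. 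One then checks that the descending dynamics keeps site $1$ occupied and site $N$ empty, lets isolated particles and holes diffuse freely and stick to blocks, and admits a descending path from every $x\in\mathcal G_N$ into the set $\mathcal A:=\{\eta_{p,q}:2\le p\le N-1\}$ of fully segregated configurations, which are exactly the states from which every move is ascending. By the general fact that $W$ attains its minimum on the recurrent classes of the cost‑$\le 0$ digraph — here the singletons of $\mathcal A$ — this confines $\arg\min W$ to $\mathcal A$ and reduces the theorem to comparing the $N-2$ numbers $W(\eta_{p,q})$.

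This comparison is carried out via the effective dynamics among the segregated states obtained after eliminating the fast (diverging‑rate) transients. From $\eta_{p,q}$ the cheapest exit from its basin is to pay the interaction barrier $\kappa/2$ to detach the rightmost particle of the block; the detached particle then either drifts to the right edge and is extracted, producing $\eta_{p-1,q+1}$, or is ``surfed'' leftward past the block along a string of zero‑cost exchanges, which frees site $1$ and triggers a left injection, producing $\eta_{p+1,q-1}$ — so both neighbours of an interior $\eta_{p,q}$ are reached with the same weight $e^{-\beta\kappa/2}$ and the coarse‑grained chain is symmetric in the bulk of the index range, forcing $W$ to be constant there. At the two extreme indices $p=2$ and $q=1$ the block abuts a reservoir, the detach‑and‑coarsen move no longer yields a new segregated state but relaxes back, and the cheapest genuine entrance/exit now involves reservoir costs; the hypotheses $\mu_L>\kappa$ and $|\mu_R|>\kappa$ are exactly what make those costs strictly exceed $\kappa/2$, so that $W(\eta_{2,N-2})$ and $W(\eta_{N-1,1})$ come out strictly above the common value of $W(\eta_{p,q})$, $3\le p\le N-2$; the bound $N\ge5$ just ensures this interior range is nonempty. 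One then returns to the original lattice to record that every $x\notin\mathcal A$ and the two extreme elements of $\mathcal A$ carry $W(x)=\alpha(x)>0$, ideally by exhibiting an explicit optimal tree for the interior $\eta_{p,q}$ and near‑optimal competitors for the rest.

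The hard part is this last comparison. Because the drive creates negative‑cost cycles, $W$ is not determined by one‑step costs, escape barriers, or any local Lyapunov function and must be handled as a genuinely global minimum over spanning trees; the delicate points are (i) the exact equidominance of all interior $\eta_{p,q}$ and (ii) the strict loss at $p\in\{2,N-1\}$, both of which hinge on controlling how the transient part of $\mathcal G_N$ drains into $\mathcal A$ and how that drainage contributes to the optimal trees.
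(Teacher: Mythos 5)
Your setup is fine as far as it goes: the Kirchhoff/Freidlin--Wentzell tree representation with one-step costs $c=-\phi$ is equivalent to the paper's $\Psi=\Gamma-\Theta$ (each non-root vertex of an in-tree carries exactly one outgoing edge, so the two functionals differ by the constant $\sum_y\Gamma(y)$), and your table of one-step costs is correct. The gap is in your reduction step. You assert that the segregated states $\eta_{p,q}$, $2\le p\le N-1$, are \emph{exactly} the states from which every move is strictly ascending, hence the only recurrent classes of the cost-$\le 0$ digraph, so that $\arg\min W$ is confined to them. That characterization is false: for $N=7$ the configuration $(1,1,0,0,1,1,0)$ has only strictly positive exit costs (bulk detachments cost $\kappa/2$, death at site $1$ costs $(\mu_L+\kappa)/2$, birth at site $N$ costs $(|\mu_R|-\kappa)/2$), and, more seriously, every multi-block state in the paper's class $\mathcal P$ (blocks of length $\ge 3$, gaps $\ge 3$, last gap $\ge 2$) is such a sink \emph{and} has the same maximal lifetime $\asymp e^{\beta\kappa/2}$ as the $\eta_{p,q}$. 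These multi-block configurations are precisely the competitors that cannot be eliminated by any descending-path, lifetime or local argument; excluding them is the bulk of the paper's proof in Section \ref{prof}, where one shows that the minimal in-tree cost equals $(|\mathcal P|-1)\kappa$ exactly for the single-block states $\eta_{p,q}$ with $p\ge3$, $q\ge2$, and is strictly larger for every multi-block element of $\mathcal P$ and for every state lying in no attractor. Your proposal never touches these states: the coarse-grained ``interface random walk'' only compares segregated states with one another.

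Two further weak points. The ``general fact'' you invoke --- that minimizers of $W$ sit in recurrent classes of the cost-$\le 0$ digraph --- is not a standard fact in this non-reversible, negative-cost setting: re-rooting an optimal tree along a descending edge removes at the new root an edge whose cost may itself be negative, so the usual surgery does not give $W(y)\le W(x)$; the paper instead works with the nonnegative relative costs $U(x,y)=-\phi(x,y)-\Gamma(x)$ (every state has a $U=0$ preferred successor) and with the attractors of the preferred-successor graph (Proposition \ref{bep:bep2}). Also your mechanism for the loss at the extreme indices does not match the model: $\eta_{2,N-2}$ is not a trap whose exit requires a reservoir cost exceeding the bulk barrier; the detachment is its preferred exit, the pair then resettles as $(0,1,1,0,\dots)$ and a left injection yields $\eta_{3,N-3}$, all along preferred moves, so $\mathcal U(\eta_{2,N-2},\eta_{3,N-3})=0$, $\eta_{2,N-2}$ spans no attractor, and it loses through its in-tree (accessibility) cost rather than through boundary barriers. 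Since you yourself defer the global tree comparison --- the exact equidominance of the interior $\eta_{p,q}$ and the strict inequalities for everything else, which is where the theorem's content lies and what the explicit tree constructions and lower bounds of Section \ref{prof} provide --- the proposal as it stands is a plausible program, not a proof.
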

In the same case but for $N<5$, there appears a unique dominant state: $11$ for $N=2$, $110$ for $N=3$ and two dominant states $1100$ and $1110$ for $N=4$.\\
We also give the results for the other parameter regimes without proofs. They will be summarized in the phase diagram of the next section.
\begin{itemize}
\item Region \MakeUppercase{\romannumeral 1} consists of the patches $[\kappa >0,  \mu_L> 0, \mu_R>0]$, $[0<-\kappa < \mu_L <\mu_R]$, $[\kappa>-\mu_R>0,\mu_L>0]$ and $[0<-\kappa < \mu_L <\mu_R]$. The fully occupied state $(1,1,\ldots,1,1)$ is the unique dominant state.
\item Region \MakeUppercase{\romannumeral 2} consists of two subsection, \MakeUppercase{\romannumeral 2}$_A$ $[0<\mu_L<-\kappa<\mu_R]$) and \MakeUppercase{\romannumeral 2}$_B$ $[0<\mu_R<-\kappa<\mu_L]$. The set of dominant states depends on whether $N$ is odd or even. 
\begin{itemize}
\item For odd $N$, the unique dominant state is $(1,0,1,0,\ldots,1,0,1)$ for both patches. 
\item For even $N$, the dominant states that are shared by both \MakeUppercase{\romannumeral 2}$_A$ and \MakeUppercase{\romannumeral 2}$_B$ are those which have either an extra vacancy or an extra occupied site compared with the odd case (except for two configurations, discussed below). Those which are shared for $A$ and $B$ are of the form $(1,0,1,0,\hdots,1,0,0,1,\hdots,0,1)$ or $(1,0,1,\hdots,0,1,1,0,1,\hdots,0,1)$. \\ The difference between \MakeUppercase{\romannumeral 2}$_A$ and \MakeUppercase{\romannumeral 2}$_B$ is that $(1,0,1,0,\hdots,1,0,1,1)$ is dominant for $A$ but not for $B$ since its preferred successor is $(1,0,1,0,\hdots,1,1,0,1)$ with rate (2) which yields the maximal life-time (see section \ref{kir}) when the parameters lie in $A$, and $(1,0,1,0,1\hdots,0,1,0)$ (with rate (2)) when the parameters lie in $B$ (here, the life-time is less than the life-time of dominant states in $B$). Similarly, one can argue that the state $(1,1,0,1,0\hdots,1,0,1)$ is not dominant for patch $A$ whereas it is for $B$.
\end{itemize}
\item Region \MakeUppercase{\romannumeral 3} consists of the patches $[0<\mu_L<\mu_R<-\kappa]$ and $[0<\mu_R<\mu_L<-\kappa]$. For odd $N$ the unique dominant state is $(1,0,1,0,\ldots,1,0,1)$. For $N$ even the dominant states are all states that have an extra vacancy compared with the odd case and where the occupied sites are not neighboring, i.e., of the form; $(1,0,1,\hdots, 1,0,0,1,\hdots,0,1)$;
\item Region \MakeUppercase{\romannumeral 4} consists of the lower left half where $[\mu_R,\kappa<0,\mu_L>0]$. Here, the dominant states are all states where the occupied sites are not neighboring, i.e., of the form;\\ $(1,0,0,\ldots,0,1,0,0,\ldots,0,1,0,0,\ldots,0,\ldots, 0,0,\ldots,0,0)$,
\item Region \MakeUppercase{\romannumeral 5}, where $[-\mu_R>\kappa>0,\mu_L>0]$, the dominant states are given by Theorem \ref{thm:thm1};
\end{itemize}
Although we give only a proof for region \MakeUppercase{\romannumeral 5}, determining the dominant states for the other regions is less difficult since they appear to be rater trivial and expected. Furthermore, region \MakeUppercase{\romannumeral 5} contains a wide variety of possible dominant states which makes it by far the most interesting patch in the phase diagram.

\section{Discussion}\label{disc}

The dynamics of low temperature lattice gases in in general dominated by domain wall movements; see e.g. \cite{priv} and in particular the contribution by S.J.~Cornell and more recently in \cite{goi} for zero-temperature Kawasaki dynamics in two dimensions, or more generally in \cite{ex} also for (driven) nonequilibrium models.  That remains true for the boundary driven case of the present paper but there is an additional element of transport.  For the situation of Theorem \ref{thm:thm1} particles are created to the left and they disappear towards the right.  The dominant states are states with one ``interface'' and low temperature motion can be pictured as a random walk of that interface on a time scale which is exponentially long in $\beta$.
The particle current then naturally also appears to go to zero with low temperature, and is exponentially small.  One could think that the system becomes more and more equilibrium-like as the current gets smaller, but that is not the case.  The reason is that the dynamical activity, which is basically the rate of escape from the dominant states and which sets the time-scale, also goes to zero exponentially fast at the same rate.  The total result is a nonequilibrium behavior, with dominant states that do not correspond to minima of the energy.  Detailed aspects of low temperature current and dynamical activity will appear in another paper, jointly with Karel Neto\v{c}n\'{y}.\\

To visualize the full zero-temperature phase diagram we fix
$\mu_L>0$, and we indicate the different regions as a function of the interaction strength $\kappa$  and the right chemical potential $\mu_R$. The roman numbers indicate patches of the diagram for which the parameters yield the same set of dominant states:\\

\setlength{\unitlength}{1cm}
\begin{center}
\begin{picture}(6,6)(-3,-3)
\put(-3,0){\vector(1,0){6}}
{\linethickness{0.4mm}
\put(0,0){\line(-1,0){3}}}
\put(3,-0.2){$\kappa$}
{\linethickness{0.4mm}
\put(0,-3){\line(0,1){3}}}
\put(0,-3){\vector(0,1){6}}
\put(0.2,3){$\mu_R$}
\put(0.4,0.2){{\Huge \MakeUppercase{\romannumeral 1} }} 
\put(-2.3,2.3){{\LARGE \MakeUppercase{\romannumeral 2}$_B$ }}  
\put(-1.3	,0.3){{\LARGE \MakeUppercase{\romannumeral 2}$_A$ }}
\put(-2.8,1.2){{\Huge \MakeUppercase{\romannumeral 3} }}
\put(-2,-1.5){{\Huge \MakeUppercase{\romannumeral 4} }}	
\put(0.4,-2.5){{\Huge \MakeUppercase{\romannumeral 5} }}
{  \thicklines \put(-3,3){\line(1,-1){6}}}
{\linethickness{0.4mm}
\multiput(-3,1.5)(0.4,0){15}
{\color{red}\line(1,0){0.2}}}
{\linethickness{0.4mm}
\multiput(-1.5,0)(0,0.18){15}
{\line(0,1){0.4}}}
\put(3,1.3){$\mu_L$}
\put(-1.8,-.3){$-\mu_L$}
\end{picture}\\

FIG. 1. The dashed line shows the equilibrium condition (detailed balance) where $\mu_R=\mu_L$. The line $\kappa=0$ is treated in Section \ref{sse}; there all bulk configurations remain supported.
\end{center}

Although Theorem \ref{thm:thm1} only applies to $[\mu_L>-\mu_R>\kappa>0]\subset[-\mu_R>\kappa>0,\mu_L>0]$, the statements used to prove Theorem \ref{thm:thm1} are quite similar as to proving it for its complement in $[-\mu_R>\kappa>0,\mu_L>0]$.

\section{Low temperature asymptotics}\label{kir}
 The present section starts from general definitions and assumptions that are all verified in the case of the driven Kawasaki dynamics of the previous section.  Our notation will however refer more generally to an irreducible continuous time Markov jump process on a finite state space $K$ with transition rates $k(x,y;\beta)$ for $x\rightarrow y$ that depend on a real parameter $\beta$ (to be interpreted as inverse temperature as in \eqref{dif} and \eqref{rea} for $K={\mathcal G}_N$).  We also assume that $k(x,y,\beta)>0$ iff  $k(y,x,\beta)>0$.  The unique stationary distribution is denoted by $\rho = \rho_\beta$.\\
   We follow the set-up of \cite{heatb} in assuming the existence of the (logarithmic) limit
 \[
 \phi(x,y):= \lim_{\beta\rightarrow\infty}\frac{1}{\beta}\log k(x,y,\beta), \quad x,y\in K
 \]
Thus, $k(x,y,\beta) \asymp \exp[\beta \phi(x,y)]$.
Then, the escape rates have the asymptotics $\xi(x) :=\sum_y k(x,y)\asymp e^{-\beta \Gamma(x)}$ with $\Gamma(x)=-\max_y \phi(x,y)$. The asymptotic life-time of a state $x$ is thus $e^{\beta \Gamma(x)}$ and  when the system makes a jump from $x$, the probability to jump to state $y$ asymptotically goes like $p(x,y):=k(x,y)/\xi(x)\asymp e^{-\beta U(x,y)}, x\neq y$ where $U(x,y) := -\phi(x,y) - \Gamma (x)\geq 0$. We put $U(x,x) = +\infty$.  For all $x$ there is at least one state $y\neq x$ for which $U(x,y)=0$ --- we call these states \textit{preferred successors} of $x$.\\

 A useful low temperature representation of the stationary distribution is in terms of the Kirchoff formula \cite{heatb}.  We make the state space $K$ into a graph with its elements $x$ as vertices and edges $x\sim y$ for these pairs where $k(x,y;\beta) > 0$ (iff $k(y,x;\beta) >0$) assuming that this does not depend on $\beta >0$.  We denote by $\mathcal{T}_x$ the in-tree to $x$ defined for any tree $\mathcal{T}$ on $K$ by orienting every edge in $\mathcal{T}$ towards $x$.\\
 In \cite{heatb} a Kirchoff formula for the low temperature stationary distribution was obtained:

\begin{proposition}
 For all $x\in K$,
\begin{equation}
\rho_\beta(x) \asymp \exp{-\beta[\Psi(x) - \max_{y\in K}\Psi(y)]}
\end{equation}
where 
\begin{equation} \label{thet}
\Psi(x):=\Gamma(x)-\Theta(x)
\end{equation}
for $\Theta(x):=\min_{\mathcal{T}}U(\mathcal{T}_x)$ and $U(\mathcal{T}_x):=\sum_{(y,z)\in {\mathcal T}_x} U(y,z)$.
\end{proposition}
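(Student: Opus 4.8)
The plan is to derive the formula from the exact, finite-$\beta$ Markov-chain tree (Kirchhoff) representation of the stationary distribution and then to extract its logarithmic asymptotics by a finite Laplace principle. Since $X_t$ is irreducible on the finite set $K$ and each rate is positive exactly when its reverse is, the classical matrix-tree theorem gives the exact identity
\begin{equation}
\rho_\beta(x) = \frac{W(x)}{\sum_{z\in K} W(z)}, \qquad W(x):=\sum_{\mathcal{T}_x}\ \prod_{(y,z)\in\mathcal{T}_x} k(y,z;\beta),
\end{equation}
where the outer sum runs over all spanning in-trees rooted at $x$ and the product over their oriented edges. The entire content of the proposition is then to control the $\beta\to\infty$ behaviour of $W(x)$ and of the normalizing sum.

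First I would record the elementary Laplace principle for the notation $\asymp$: if finitely many positive functions satisfy $f_i(\beta)\asymp e^{\beta a_i}$, then $\sum_i f_i(\beta)\asymp e^{\beta\max_i a_i}$, and products of $\asymp$-equivalent families add exponents. Here the finiteness of $K$, hence of the number of in-trees, and the positivity of all rates are essential: there are no sign cancellations, which is precisely why Kirchhoff's manifestly positive tree sum, rather than a signed cofactor, is the right object to feed into the asymptotics.

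Next I would factor each tree weight through the escape rate. In any in-tree rooted at $x$ every vertex $y\neq x$ is the tail of exactly one edge, so writing $k(y,z)=\xi(y)\,p(y,z)$ gives
\begin{equation}
W(x)=\Big(\prod_{y\neq x}\xi(y)\Big)\ \sum_{\mathcal{T}_x}\prod_{(y,z)\in\mathcal{T}_x} p(y,z).
\end{equation}
The prefactor is independent of the tree and, using $\xi(y)\asymp e^{-\beta\Gamma(y)}$ and setting $S:=\sum_{y}\Gamma(y)$, satisfies $\prod_{y\neq x}\xi(y)\asymp e^{\beta(\Gamma(x)-S)}$. For the tree sum, each summand is $\asymp e^{-\beta U(\mathcal{T}_x)}$ with $U(\mathcal{T}_x)=\sum_{(y,z)}U(y,z)$, so the Laplace principle together with the definition $\Theta(x)=\min_{\mathcal{T}}U(\mathcal{T}_x)$ gives $\sum_{\mathcal{T}_x}\prod p\asymp e^{-\beta\Theta(x)}$. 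Multiplying, $W(x)\asymp \exp[\beta(\Gamma(x)-\Theta(x)-S)]=\exp[\beta(\Psi(x)-S)]$.

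Finally I would normalize. The common factor $e^{-\beta S}$ cancels between $W(x)$ and $\sum_z W(z)\asymp e^{\beta(\max_z\Psi(z)-S)}$, yielding $\rho_\beta(x)\asymp\exp[\beta(\Psi(x)-\max_{y}\Psi(y))]$, equivalently $\rho_\beta(x)\asymp e^{-\beta(\max_{y}\Psi(y)-\Psi(x))}$, which is the claimed asymptotics; in particular $\rho_\beta(x)\asymp 1$ exactly for the maximizers of $\Psi$. The only genuine obstacle is the rigorous justification of the Laplace step, that $\asymp$ really does pass through the finite sum, the quotient, and the tree product without hidden corrections, together with the bookkeeping that the prefactor $\prod_{y\neq x}\xi(y)$ contributes the same $-\beta S$ to every root and hence drops out. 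Both are routine given finiteness and positivity, and the substance of the statement lives entirely in the combinatorial identity $\Psi=\Gamma-\Theta$ produced by the one-edge-per-nonroot-vertex accounting in the in-tree.
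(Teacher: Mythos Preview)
Your argument is correct. The paper itself does not prove this proposition; it merely quotes it from \cite{heatb}, so there is no in-paper proof to compare against. Your derivation---Kirchhoff/Markov-chain tree representation of $\rho_\beta$, the factorization $k(y,z)=\xi(y)p(y,z)$ together with the one-outgoing-edge-per-nonroot-vertex bookkeeping, and then the finite Laplace principle applied to sums, products and the quotient---is the standard route and is exactly what one expects the cited reference to contain. The only cosmetic remark is that the sign convention in the final display should read $\rho_\beta(x)\asymp\exp\big[-\beta(\max_y\Psi(y)-\Psi(x))\big]$, which you do state; just make sure when writing it up that the exponent matches the paper's $-\beta[\Psi(x)-\max_y\Psi(y)]$ with the intended overall minus sign.
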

Naturally then, we call a state $x$ {\it dominant} if ($\rho_\beta(x) \asymp 1$ or) $\Psi(x) \geq \Psi(y)$ for all $y\in K$.  If a state $y$ is not dominant, $\rho(y)\asymp e^{-\beta \alpha}, \alpha>0$.  Note that two considerations combine; the first term in $\Psi$ being a measure of the life-time $\Gamma(x)$ and the second term, $U(\mathcal{T}_x)$ relating to the accessibility from other states.   Under equilibrium conditions, these reduce to energy considerations only.  Indeed, suppose say $k(x,y;\beta) = \exp[\beta E(x) - \beta \Delta(x,y)]$. Then $\Gamma(x) = -E(x) + v(x)$ and $U(y,z) = \Delta(y,z) - v(y)$ for $v(z):= \min_u \Delta(z,u)$.  Assuming detailed balance $\Delta(x,y)=\Delta(y,x)$ we get $\Theta(x) = v(x) + C$ so that then $\Psi(x) = -E(x)$ up to a constant $C$.\\

A path $D=(x_1,\ldots,x_n)$ is an ordered sequence of oriented edges on the graph $(K,\sim)$ for which we denote $U(D):=U(x_0,x_1)+\hdots + U(x_{n-1},x_n)$, growing with the number of edges not following a preferred successor. Fixing the beginning $x_0=x$ and end $x_n=y$ we write 
\[
{\cal U}(x,y) = \min_D U(D)
\]
over all paths from $x$ to $y$. Any path realizing that minimum is called a `preferred path'.\\
A non-empty set $A\subset K$ is called an \textit{attractor} when
\begin{enumerate}
\item ${\cal U}(x,y)=0$ for all $x\neq y \in A$;
\item ${\cal U}(x,y)>0$ for all $x\in A$ and $y\in K \setminus A$.
\end{enumerate}
As an example, it is easy to see for the case of Theorem 1 that the pair $\{(1,1,1,0,0),(1,1,0,1,0)\}$ is an attractor for ${\mathcal G}_{N=5}$ with rates \eqref{rea} and \eqref{dif}.

\begin{proposition}
\label{bep:bep2}
Let ${\cal A}=\cup_iA_i\subset K$ collect all states that are elements of an attractor $A_i$.  Then, for all $y\in K,\; {\cal U}(y,x)=0$ for some $x\in {\cal A}$.
\end{proposition}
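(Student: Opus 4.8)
The plan is to recast the whole statement in terms of a single directed graph. Let $G_0$ be the digraph on $K$ with an arc $x\to z$ precisely when $z$ is a preferred successor of $x$, i.e. when $U(x,z)=0$. Since every $U(y,z)$ is nonnegative while $U(w,w)=+\infty$, a path $D=(x_0,\ldots,x_n)$ has $U(D)=0$ if and only if each of its edges is an arc of $G_0$; hence, for $x\neq y$, $\mathcal{U}(x,y)=0$ exactly when there is a directed path from $x$ to $y$ in $G_0$. Write $R(x):=\{z\in K:\mathcal{U}(x,z)=0\}$ for the set of states reachable from $x$ along preferred paths, so $x\in R(x)$ and $R(x)$ is finite. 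The one structural input is that, as recalled in Section~\ref{kir}, every state has at least one preferred successor, so every vertex of $G_0$ has out-degree $\geq 1$.

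Next I would characterise attractors combinatorially. Call a non-empty $B\subseteq K$ \emph{closed} if no arc of $G_0$ leaves $B$, equivalently $R(x)\subseteq B$ for every $x\in B$. Claim: a set is an attractor if and only if it is a minimal (for inclusion) non-empty closed set. If $A$ is an attractor, then condition~2 forbids any preferred successor of a point of $A$ from lying outside $A$, so $A$ is closed; and if $B'\subsetneq A$ were non-empty and closed, choosing $x\in B'$ and $z\in A\setminus B'$, condition~1 gives $\mathcal{U}(x,z)=0$, i.e. $z\in R(x)\subseteq B'$, a contradiction, so $A$ is minimal. Conversely, if $B$ is minimal closed, then for any $x\in B$ the set $R(x)$ is non-empty, closed, and contained in $B$ (closedness of $B$), hence equals $B$ by minimality; this yields condition~1, while condition~2 is immediate since $\mathcal{U}(x,w)=0$ would force $w\in R(x)=B$.

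The proposition then follows quickly. Fix $y\in K$. The set $R(y)$ is non-empty, finite, and closed, so among its non-empty closed subsets choose one, $B$, of minimal cardinality; then $B$ is a minimal non-empty closed set of $K$ (any non-empty closed $B'\subsetneq B$ would also sit inside $R(y)$), hence an attractor by the previous step, so $B\subseteq\mathcal{A}$. Every $x\in B$ satisfies $\mathcal{U}(y,x)=0$ since $B\subseteq R(y)$, which is exactly the assertion ${\cal U}(y,x)=0$ for some $x\in\mathcal{A}$.

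I do not expect a real obstacle. The two points that need care are (i) the identification "$U(D)=0$ iff $D$ runs in $G_0$", which uses $U\geq 0$ together with $U(w,w)=+\infty$ to exclude trivial self-loops, and (ii) verifying \emph{both} defining properties of an attractor for a minimal closed set — specifically using closedness to get $R(x)\subseteq B$ and then minimality to upgrade this to $R(x)=B$. Everything else is the elementary fact that a finite digraph with positive out-degree at every vertex has, reachable from any vertex, a terminal strongly connected component; here that component is exactly a minimal closed set, i.e. an attractor.
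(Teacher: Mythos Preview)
Your argument is correct and rests on the same combinatorial fact as the paper's proof: in the finite digraph $G_0$ of preferred successors (out-degree $\geq 1$ everywhere), the set reachable from any $y$ contains a terminal strongly connected component, and that component is an attractor. The difference is in execution. The paper argues this by a somewhat informal induction, starting from a graph that already contains an ``attractor subgraph'' $\mathcal{C}$ and showing that adding an edge or a vertex preserves the existence of such a $\mathcal{C}$. You instead give an explicit characterisation --- attractor $=$ minimal non-empty closed set --- and then pick a closed subset of $R(y)$ of least cardinality. Your route is cleaner and more self-contained: the equivalence ``attractor $\Leftrightarrow$ minimal closed set'' makes transparent why both defining conditions of an attractor hold, whereas the paper's induction leaves the verification of condition~1 (mutual reachability inside $\mathcal{C}$) a bit implicit. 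Conversely, the paper's approach avoids introducing the auxiliary notion of a closed set. Both buy the same conclusion with the same essential input (finiteness and positive out-degree).
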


\begin{proof}
Fix a vertex $y\in K$ and consider the oriented subgraph $\mathcal{G}_y \subset \mathcal{G}$ obtained by considering all vertices and oriented edges in the set of paths $D=(x_1=y,x_2,x_3\ldots,x_n)$ that start in $y$ and go along consecutive preferred states $(x_i,x_{i+1})$ with $U(x_i,x_{i+1})=0$.  We claim that some vertex of the graph $\mathcal{G}_y$ is contained in ${\cal A}$.  The point is simply that some attractor must be contained in $\mathcal{G}_y$.  Since $\mathcal{G}_y$ is a general oriented connected graph in which each vertex $v$ has at least one outgoing edge $(v,w)$, it means quite generally that an arbitrary (but finite) oriented connected graph in which each vertex has an outgoing edge (such as $\mathcal{G}_y$) always contains an subgraph ${\cal C}$ in which all vertices in $\cal C$ can be reached from any other vertex in $\cal C$ and where all edges touching ${\cal C}$ are incoming (and not outgoing).  That can however easily be shown by induction.\\  Suppose indeed such a given oriented graph ${\cal B}$ with the (attractor) subgraph $\cal C$.  Let us fix the vertex set of ${\cal B}$ but add one oriented edge to it.  We only need to consider the case where that edge $(v,w)$ is outgoing from ${\cal C}$, i.e., $v\in {\cal C}, w \notin {\cal C}$.  Let us now add to ${\cal C}$ all edges $(w,w_2)$ in ${\cal B}$ and consider ${\cal C}' = {\cal C} \cup \{(w,w_2)\}$.  If there is no outgoing edge from ${\cal C}'$ we are finished.  If not, it must be edges, of the form $(w_2,w_3)$, which we again add etcetera.  Since the graph is finite, there is moment where no new vertices $w_i$ appear in the construction and that so obtained maximal set makes an attractor. Adding a vertex $w$ to ${\cal B}$ with just one edge connecting to ${\cal B}$ should only be considered when that edge is  of the form $(v,w)$ with $v\in {\cal C}$.  But then, the set ${\cal C}\cup(v,w)$ makes an attractor. 
\end{proof}

The Proposition gives a clear picture of the accessibility of states.  In the set $K$ there are a number of disjoint attractors and every vertex can get to one of these by a path of preferred successors.  To leave an attractor $ A_i$ means to ``pay'' $U(v,w)>0$ over an edge $(v,w)$ with $v\in A_i, w\notin A_i$.  Proving Theorem 1 is therefore first characterizing the attractors in ${\cal G}_N$, and then to find the dominant states by comparing lifetimes.

\section{Proof of Theorem 1}\label{prof}
Let $K={\mathcal G}_N$ with $N\geq 5$. Denote a state $x$ by $x=(p_0,q_0,\hdots, p_n,q_n)$, $\sum_{i=0}^n (p_i + q_i)=N$, where the $p_i$ stand for the number of consecutively occupied sites and the $q_i$ for the number of consecutively vacant sites. A priori we could have $p_0=0$ or $q_n=0$ but to make sense we require $p_i,q_{n-i} >0$ for all $n \geq i\geq 1$. We consider the following subset of states  
\begin{equation}\label{nota}
\mathcal{P} = \left\{(p_0,q_0,\hdots, p_n,q_n) \vert p_i \geq 3 \mbox{ for all } i, q_i \geq 3 \mbox{ for all } i < n, q_n\geq 2  \right\}
\end{equation}
and we let the set $\mathcal{E}$ of states that are obtained by first taking $x\in \mathcal{P}$ and then making just one occupation switch of the form $\hdots 11100\hdots\rightarrow \hdots 11010 \hdots $ or $ \hdots 000111\hdots$ to $\hdots 001011\hdots$.
\begin{lemma} \label{bep:bep3}
Take $x\in \mathcal{P}$. If $U(x,y)=0$, then $y\in \mathcal{E}$ and $U(y,z)=0$ implies $z=x$.
\end{lemma}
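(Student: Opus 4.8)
The strategy is to analyze, state by state, what the preferred successors of a configuration $x\in\mathcal{P}$ can be, by computing the exponents $U(x,y)=-\phi(x,y)-\Gamma(x)$ for every move $y$ reachable from $x$ in one step. Recall that from \eqref{dif} a bulk swap $x\to x^{i,i+1}$ has rate $\asymp\exp[-\tfrac\beta2(E(x^{i,i+1})-E(x))]$, so $\phi(x,x^{i,i+1})=-\tfrac12\Delta E$; from \eqref{rea} a boundary flip has $\phi(x,x^i)=\tfrac{\mu_i}{2}(1-2x(i))-\tfrac12\Delta E$. For $x\in\mathcal{P}$ every block of occupied sites has length $p_i\ge3$ and every interior block of holes has length $q_i\ge3$ (and $q_n\ge2$), so in particular $x(1)=1$ and $x(N)=0$. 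First I would list the candidate moves: (i) moving a particle off the \emph{right} end of an occupied block into the adjacent hole region, i.e.\ $\ldots111\,000\ldots\to\ldots110\,100\ldots$, which breaks one bond so $\Delta E=+\kappa$ and gives $\phi=-\kappa/2$; (ii) symmetrically moving a hole, i.e.\ $\ldots000\,111\ldots\to\ldots001\,011\ldots$, also $\phi=-\kappa/2$; (iii) swaps in the interior of a block, which cost $\Delta E=0$ hence $\phi=0$ — but wait, these do nothing since the two sites are equal; the genuine interior-adjacent swaps at the \emph{left} end of a block, $\ldots000\,111\ldots\to\ldots010\,011\ldots$? No: the relevant nontrivial moves are exactly the domain-wall displacements, and each such displacement that \emph{shrinks} contact costs $+\kappa$ while one that would \emph{create} a longer run is what we get on the way back. (iv) the boundary moves: a death at site $1$ (currently occupied) has $\Delta E=+\kappa$ (it breaks the bond with site $2$, since $p_0\ge3\ge2$) and the chemical-potential factor $e^{\beta\mu_L(1-2)/2}=e^{-\beta\mu_L/2}$, so $\phi=-\mu_L/2-\kappa/2$; a birth at site $1$ is impossible (site is occupied); similarly a birth at site $N$ (currently empty) gives $\phi=-|\mu_R|/2-\kappa/2$ (it would create a bond only if $x(N-1)=1$, i.e.\ if $q_n$ were... but $q_n\ge2$ so $x(N-1)=0$ and actually $\Delta E=0$, giving $\phi=-|\mu_R|/2$); and a death at site $N$ is impossible.

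\textbf{Identifying the preferred successors.} The escape rate exponent is $\Gamma(x)=-\max_y\phi(x,y)$, and from the list above the largest $\phi$ among all one-step moves from $x\in\mathcal{P}$ is $\phi=-\kappa/2$, attained precisely by the domain-wall moves of type (i) and (ii) — this uses the hypotheses $\mu_L>\kappa$ and $|\mu_R|>\kappa>0$, which make the boundary-flip exponents $-\mu_L/2-\kappa/2$ and $-|\mu_R|/2$ strictly smaller than $-\kappa/2$ (for the birth at $N$ we need $|\mu_R|>\kappa$; for the death at $1$ it is automatic). Hence $\Gamma(x)=\kappa/2$, and $U(x,y)=0$ iff $y$ is obtained from $x$ by one move of type (i) or (ii). Each such $y$ lies in $\mathcal{E}$ by the very definition of $\mathcal{E}$ (an element of $\mathcal{P}$ with one switch $11100\to11010$ or $000111\to001011$). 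That proves the first assertion $y\in\mathcal{E}$. A subtle point to check carefully: I must verify that after the switch the configuration really is of the form described in $\mathcal{E}$ and that no \emph{other} move from $x$ has $U=0$; for instance I should confirm that moving a particle off the \emph{left} end of a block into a hole to its left (i.e.\ deep in the interior, $\ldots110\,011\ldots$ read the other way) is the same family, and that there is no degeneracy from very short blocks — this is where $p_i\ge3$, $q_i\ge3$ matter, e.g.\ to ensure a domain-wall move never simultaneously destroys two bonds (which would cost $2\kappa$, not relevant) nor zero bonds (which would be favored) — in $\mathcal{P}$ each wall move breaks exactly one bond.

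\textbf{Reversibility of the move.} For the second assertion, take $y\in\mathcal{E}$ obtained from $x\in\mathcal{P}$ by, say, the switch $11100\to11010$ at a particular location. I need $U(y,z)=0\Rightarrow z=x$. I would compute $\Gamma(y)$: from $y$, the move that reverses the switch, $11010\to11100$, has $\Delta E=-\kappa$ (it \emph{restores} a bond), so $\phi(y,x)=+\kappa/2$. Any other move from $y$ breaks or preserves bonds and so has $\phi\le0<\kappa/2$ — here I must check that no other bond-creating move is available from $y$, which follows because $y$ differs from an element of $\mathcal{P}$ by a single isolated defect, so the only place where a swap gains energy is exactly the defect site. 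Therefore $\Gamma(y)=-\kappa/2$, the unique maximizer is $z=x$, and $U(y,z)=-\phi(y,z)-\Gamma(y)=0$ forces $z=x$.

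\textbf{Main obstacle.} The routine part is the energy bookkeeping; the part demanding care is the \emph{exhaustiveness} of the case analysis — making sure that in every configuration $x\in\mathcal{P}$, across all $O(N)$ possible one-step moves (bulk swaps at each of the up-to $N-1$ interior edges, plus the two boundary flips), nothing else ties the value $-\kappa/2$, and symmetrically for $y\in\mathcal{E}$. The block-length constraints in \eqref{nota} are precisely what rules out the dangerous coincidences (a swap that breaks \emph{no} bond, or a boundary flip that is energetically free), so the proof amounts to a disciplined enumeration keyed to those constraints together with the standing inequalities $\mu_L>\kappa>0$, $|\mu_R|>\kappa$.
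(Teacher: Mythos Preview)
Your proposal is correct and follows essentially the same approach as the paper's proof: compare the exponents $\phi$ of all one-step moves from $x\in\mathcal P$, observe that the domain-wall swaps (your types (i) and (ii)) achieve the maximum $\phi=-\kappa/2$ while the boundary flips are strictly worse because of the factors $e^{-\beta\mu_L/2}$ and $e^{\beta\mu_R/2}$, and then check that from the resulting $y\in\mathcal E$ the unique bond-restoring swap back to $x$ has $\phi=+\kappa/2$ and dominates every alternative. The paper's own proof is a two-sentence sketch of exactly this reasoning; you have simply unpacked the energy bookkeeping and the role of the inequalities $\mu_L>\kappa>0$ and $|\mu_R|>\kappa$ that the paper leaves implicit.
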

\begin{proof}
Take $x\in \mathcal{P}$. Apart from a possible switch, the state $x$ could also change via the annihilation of a particle at the left boundary or the creation of a particle at the right boundary. But the rates of the latter to occur have a factor $e^{\mu_R \beta/2}$ or $e^{-\mu_L \beta/2}$ so that the transition to a switched state is always preferred for states in $\mathcal{P}$. \newline
Let now $y\in \mathcal{E}$  be that preferred successor to $x$:
\[
y \equiv  \left( p_0,q_0,\hdots,p_i,q_i,\left[  11\hdots 010\hdots 0  \right], p_{i+2}, \hdots, 00\right).
\]
It is trivially checked that the preferred successor to $y$ is again $x$.
\end{proof}

\begin{corollary}\label{ax}
Let $x\in \mathcal{P}$. Then, the set containing $x$ and all its preferred successors is an attractor ${\cal A}_x$ that has empty intersection with any other ${\cal A}_y$, similarly made from $y\in \mathcal{P}, y\neq x$. 
\end{corollary}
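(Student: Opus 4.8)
The plan is to deduce the corollary directly from Lemma~\ref{bep:bep3}, which already pins down every zero-cost edge (an edge $(u,v)$ with $U(u,v)=0$) incident to a state of $\mathcal{P}$ or to one of its preferred successors. Write ${\cal A}_x=\{x\}\cup S_x$, where $S_x:=\{y:\ U(x,y)=0\}$ is the set of preferred successors of $x$. By the general remark recalled in Section~\ref{kir}, $S_x$ is nonempty, and by Lemma~\ref{bep:bep3} it is contained in $\mathcal{E}$, with $x$ the unique preferred successor of every $y\in S_x$; so ${\cal A}_x$ is exactly the set described in the corollary.

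First I would verify that ${\cal A}_x$ is an attractor. For condition~(1): $U(y,x)=0$ for each $y\in S_x$ (its unique preferred successor is $x$), and $U(x,y')=0$ for each $y'\in S_x$ by definition, so any two distinct states of ${\cal A}_x$ are joined by a preferred path through the hub $x$ --- a single edge $x\to y$ or $y\to x$, or the two-edge path $y\to x\to y'$ --- whence ${\cal U}(u,v)=0$ throughout ${\cal A}_x$. For condition~(2): Lemma~\ref{bep:bep3} says the only zero-cost edges out of $x$ land in $S_x\subset{\cal A}_x$, and the only zero-cost edge out of any $y\in S_x$ lands on $x\in{\cal A}_x$, so no zero-cost edge leaves ${\cal A}_x$. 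Since $U(D)$ is a sum of nonnegative terms, a path $D$ has $U(D)=0$ iff all of its edges are zero-cost; hence no preferred path can exit ${\cal A}_x$, giving ${\cal U}(u,v)>0$ for all $u\in{\cal A}_x$, $v\notin{\cal A}_x$.

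For disjointness, suppose $z\in{\cal A}_x\cap{\cal A}_y$ with $x\neq y$ in $\mathcal{P}$, and use the elementary fact $\mathcal{P}\cap\mathcal{E}=\emptyset$: every state of $\mathcal{E}$ contains an isolated particle (a substring $010$), while in $\mathcal{P}$ every run of occupied sites has length $\ge 3$ and the configuration begins with $111$ and ends with $00$, so no such substring occurs. Now combine the two memberships $z\in\{x\}\cup S_x$ and $z\in\{y\}\cup S_y$. If $z=x=y$ we are done. If $z$ equals one of $x,y$ and lies in the other $S$, then a single state sits in both $\mathcal{P}$ and $\mathcal{E}$, impossible. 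If $z\in S_x\cap S_y$, then $z$ is a preferred successor of both $x$ and $y$, so by the uniqueness clause of Lemma~\ref{bep:bep3} the unique preferred successor of $z$ is both $x$ and $y$, forcing $x=y$. Every branch contradicts $x\neq y$, so ${\cal A}_x\cap{\cal A}_y=\emptyset$.

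Given Lemma~\ref{bep:bep3} this is essentially a formality: conditions~(1)--(2) just say that a set which is closed under, and internally connected through one common hub by, the preferred-successor relation is an attractor, and disjointness is forced by uniqueness of preferred successors. The only step carrying any content is the comparison $\mathcal{P}\cap\mathcal{E}=\emptyset$, which needs the boundary blocks $p_0$ and $q_n$ of a state in $\mathcal{P}$ to be inspected alongside the internal runs (the hypothesis $N\ge 5$ enters only to make $\mathcal{P}$ nonempty); this is the same pattern-matching the proof of the Lemma already performs, so I expect no real obstacle.
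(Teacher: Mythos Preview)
Your proof is correct and follows essentially the same approach as the paper's: both deduce the attractor property and disjointness directly from Lemma~\ref{bep:bep3}. The paper's version is far terser (``The fact that it is an attractor is immediate''; for disjointness it notes that $\mathcal{U}(x,y)=0$ would contradict the two-step return of preferred paths), while you spell out conditions (1)--(2) explicitly and handle disjointness by a case analysis on the shared element, making the auxiliary fact $\mathcal{P}\cap\mathcal{E}=\emptyset$ explicit where the paper leaves it implicit.
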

\begin{proof}
The fact that it is an attractor is immediate. But also, $x\neq y \in \mathcal{P}$ cannot be in the same attractor since that would imply that $\mathcal{U}(x,y)=0$. We know however that all preferred paths from $x$ come back to $x$ in two steps.
\end{proof}

Next comes the opposite, that any attractor corresponds also with exactly one element in $\mathcal{P}$.
\begin{lemma} \label{bep:bep4}
The number of attractors is exactly the cardinality of $\mathcal{P}$.
\end{lemma}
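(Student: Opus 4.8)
The plan is to combine Corollary \ref{ax}, which already produces $|\mathcal{P}|$ pairwise disjoint attractors ${\cal A}_x$ ($x\in\mathcal{P}$), with a proof that there are no others; since Corollary \ref{ax} makes $x\mapsto{\cal A}_x$ injective, this gives a bijection between $\mathcal{P}$ and the set of attractors. First I would record two elementary consequences of the definition of attractor: (i) two attractors sharing a vertex coincide (if $z\in A\cap B$ and $w\in A\setminus B$, then ${\cal U}(z,w)=0$ by item~1 for $A$ while ${\cal U}(z,w)>0$ by item~2 for $B$, a contradiction); and (ii) if $A$ is an attractor, $y\in A$ and ${\cal U}(y,z)=0$, then $z\in A$. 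Granting these, the whole lemma reduces to the single claim: for every $y\in\mathcal{G}_N$ there is a \emph{preferred path} (a path $D$ with $U(D)=0$) from $y$ into $\mathcal{P}$. Indeed, given an attractor $A$, pick $y\in A$ and let $D$ be such a path ending at $z\in\mathcal{P}$; then ${\cal U}(y,z)=0$, so $z\in A$ by (ii), and since $z\in{\cal A}_z$ too, $A={\cal A}_z$ by (i).

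To prove the claim I would work from the table of exponents $\phi(x,\cdot)=\lim_\beta\beta^{-1}\log k$ read off from \eqref{dif} and \eqref{rea}: a bulk exchange across a bond has $\phi\in\{-\kappa/2,\,0,\,\kappa/2\}$ according to whether it destroys, leaves unchanged, or creates a particle--particle contact; creation at site $1$ (possible only if $x(1)=0$) has $\phi=\tfrac12\mu_L+\tfrac12\kappa\,x(2)>0$ and annihilation at site $1$ has $\phi=-\tfrac12\mu_L-\tfrac12\kappa\,x(2)<0$; creation at site $N$ has $\phi=\tfrac12\mu_R+\tfrac12\kappa\,x(N-1)<0$ and annihilation at site $N$ has $\phi=\tfrac12|\mu_R|-\tfrac12\kappa\,x(N-1)>0$, the signs using $|\mu_R|>\kappa$. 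Using the ordering of $\mu_L$, $|\mu_R|$, $\kappa$ (and $2\kappa$) prescribed in region~V one then reads off which moves are preferred: if $x(1)=0$ the unique preferred move is creation at site $1$ (its exponent strictly exceeds every other); once $x(1)=1$, annihilation at site $1$ is never preferred; creation at site $N$ is never preferred; and once $x(1)=1$, $x(N)=0$ all preferred moves are bulk exchanges, so that along preferred paths $x(1)=1$ and $x(N)=0$ persist, apart from transient excursions (an exchange across bond $(1,2)$ may momentarily empty site $1$, instantly refilled; one across $(N-1,N)$ may momentarily fill site $N$, instantly emptied). Hence from any $y$ some preferred path first reaches a configuration with $x(1)=1$, $x(N)=0$.

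For such configurations, writing $x=(p_0,q_0,\dots,p_n,q_n)$, call a block $p_i$ or an interior gap $q_i$ ($i<n$) \emph{short} if its length is $\le 2$, and the last gap $q_n$ short if its length is $\le 1$; thus $x\in\mathcal{P}$ precisely when $x$ has no short block or gap. The core of the proof is to exhibit, for each $x\notin\mathcal{P}$ with $x(1)=1,x(N)=0$, a preferred path back into $\{x(1)=1,x(N)=0\}$ that strictly decreases a non-negative integer-valued functional $d(x)$ built from the short blocks and gaps; iterating then terminates in $\mathcal{P}$ after finitely many rounds. Qualitatively: a short block at the left end is grown by repeated feeding from the left reservoir; a short block near the right end is eroded particle by particle by sliding it to site $N$ through energy-neutral exchanges and annihilating it there; and an interior short block or short gap is cured by chaining energy-neutral ($\phi=0$) exchanges that translate lone particles and slide single-site gaps until an energy-lowering merge ($\phi=\kappa/2$) across the contracted gap becomes available.

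The main obstacle is exactly this last point: choosing $d$ and the accompanying preferred path so that $d$ actually drops at every round. The difficulty is that curing one short block or gap can transiently create another — merging two short blocks yields a longer block but may leave a short gap behind, a translated lone particle momentarily shortens a neighbouring gap, and so on — so $d$ cannot be a naive count of ``bad'' sites; one must weight or order the defects (e.g.\ process the leftmost defect first, so that rearrangements strictly to its right do not count against $d$), and then verify, by a finite case analysis over the local patterns $\cdots 1^a0^b1^c\cdots$ at block/gap interfaces together with the $\phi$-table, that the prescribed neutral-shuffle-then-merge (or reservoir-exchange) sequence is indeed a preferred path and strictly lowers $d$. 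Everything else — the two facts about attractors, the computation of $\phi$, and the reduction to the claim — is routine.
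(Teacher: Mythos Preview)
Your plan is essentially the paper's own: reduce the lemma to the claim that every $y$ reaches $\mathcal{P}$ along a preferred path, then cure the short blocks/gaps one at a time. The paper realises your abstract functional $d$ concretely as the \emph{number of bad intervals} and processes the \emph{last} one, merging a short $p_j$ leftward into $p_{j-1}$ (or, for a short $q_k$, rightward into $p_{k+1}$), so that the already-good intervals on the far side are never disturbed and the count drops by one --- exactly the ``process one end first so rearrangements on the other side do not count'' idea you propose.

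One small correction worth recording: your assertion that creation at site $1$ is the \emph{unique} preferred move whenever $x(1)=0$ is not valid over the full parameter range of Theorem~\ref{thm:thm1}. The hypotheses give only $\mu_L>\kappa$ and $|\mu_R|>\kappa$, with no ordering between $\mu_L$ and $|\mu_R|$; if $x(N)=1$, $x(N-1)=0$ and $|\mu_R|>\mu_L$, then annihilation at site $N$ has $\phi=|\mu_R|/2>\mu_L/2$ and beats creation at site $1$. This does not damage the argument, however: performing whichever boundary move is preferred (both have $\phi>\kappa/2$, so they dominate all bulk exchanges) still lands you, in at most two steps, at a state with $x(1)=1$, $x(N)=0$, after which your analysis proceeds unchanged.
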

\begin{proof}
It suffices to show that  for any $y\notin \mathcal{P}\cup \mathcal{E}$ there must be a path of consecutive preferred successors from $y$ to some $x\in \mathcal{P}$.\\
Write $y=(p_0,q_0,\hdots, p_n,q_n)$.  Clearly we can go via consecutive preferred successors to a state where $p_0\neq 0$ and $q_0\neq 0$.  In fact, it is easily checked that we can even obtain $p_0\geq 3$ and $q_n\geq 2$ just moving along preferred successors.  For example, from $y=(p_0=2,q_0,\hdots, p_n,q_n)$ with otherwise $p_i\geq 3$ and $ q_i\geq 3$ except possibly for $q_n\geq 2$, there is a path along preferred successors to $(3,q_0-1,\hdots, p_n,q_n)$. So we can as well assume from the start that $p_0\geq 3, q_n\geq 2$. Imagine now as a further possibility that 
  $y=(p_0,q_0,\hdots, p_n,q_n)$ has exactly one $j\neq 0$ with $1\leq p_{j}\leq 2$ (and all others again verifying $p_i\geq 3$ and $ q_i\geq 3$ except possibly for $q_n\geq 2$).   Then, one can construct a path from $(p_0,q_0,p_1,q_1,\hdots,p_{j-1},q_{j-1} ,p_j,q_j,\hdots ,p_n,q_n)$ to $ (p_0,q_0,\hdots p_{j-1}+p_j,  q_{j-1}+q_j,\hdots p_nq_n) \in \mathcal{P}$ along preferred successors.  The same applies of course to the situation when there is one $q_k\leq 2$, then $p_k$ is added to $p_{k+1}$, etc.  Since we can thus treat all cases where there is one $p_i$ or $q_i$ which is not appropriate to belong to $ \mathcal{P}$, we can work with induction on the number of ``bad'' intervals, i.e., those which fail to have $p_i\geq 3$ or $q_i\geq 3$.  One picks then the last bad interval to redo the joining of above, and one ends with one bad interval less.  The induction can therefore proceed.
\end{proof}

We look back at the attractors ${\cal A}_x, x\in {\cal P}$ of Corollary \ref{ax}.  We also consider now as in Proposition 1 for any tree on ${\cal G}_N$ the in-tree $\mathcal{T}_x$ to $x$ by orienting all edges toward $x$; there is then a unique path $D(z,\hdots , x)$ from any vertex $z\neq x$ to $x$ along the tree.  Suppose now $z\in \mathcal{P}$.  Since $z$ spans the attractor ${\cal A}_z$, there exists an edge $(u,v)\in \mathcal{T}_x$ which is pointing out of the attractor, i.e., $u\in A_z$, $v\notin A_z$ with the property 
\begin{equation}\label{bep:bep9}
U(u,v)\geq \kappa
\end{equation}
Furthermore, all states in  the $|\cal P|$ attractors must be connected to $x$.  But we have just seen that to leave an attractor the cost is at least $\kappa$.  Therefore, whenever $x\in \mathcal{G}_N$,
\begin{equation}
\label{bcol:bcol1}
U(\mathcal{T}_x) =\sum_{(u,v)\in\mathcal{T}_x} U(u,v)\geq(|\cal P|-1)\,\kappa
\end{equation}

Let us define the candidate dominant states, as in Theorem 1, $\mathcal{D}=\{x\in \mathcal{G}_N \vert x\equiv \eta_{p,q}, p\geq 3, q\geq 2 \}\subset \mathcal{P}\subset \mathcal{G}_N$
 where $p+q=N$.  In the notation of \eqref{nota}, $\eta_{p,q} = (p,q)$. 
\begin{lemma}
\label{bep:bep5}
$\mathcal{U}(\eta_{p,q},\eta_{p+1,q-1}) = \mathcal{U}(\eta_{p+1,q-1},\eta_{p,q})=\kappa$ for $q\geq 3$. 
Moreover, for
$x = (p_0,q_0,\hdots, p_n,q_n)\in \mathcal{P}\backslash \mathcal{D}$, 
\[
\mathcal{U}((p_0,q_0,\hdots, p_n,q_n),(p_0,q_0,\hdots, p_n-1,q_n + 1))=\kappa
\]
and by iteration
\[
\mathcal{U}((p_0,q_0,\hdots, p_n,q_n),(p_0,q_0,\hdots, p_{n-1},q_{n-1}'))=(p_n-2)\kappa
\]
where $q_{n-1}' = q_{n-1} + p_n + q_n$.
\end{lemma}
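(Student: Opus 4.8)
The plan is to get all three identities by combining explicit path constructions (upper bounds) with the attractor count already in hand (lower bounds), working only with the values of $\Gamma$ and $U$ along a narrow ``corridor'' of configurations that differ from $\eta_{p,q}$ (resp.\ from $x$) by a single migrating particle. So I first reduce everything to a finite table of local data, then read the bounds off it.

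The preliminary step is to record, for every bulk exchange, that $E(x^{i,j})-E(x)\in\{-\kappa,0,\kappa\}$ — it equals $\kappa(x(a)-x(b))$ where $a,b$ are the ``outer'' neighbours of the two swapped sites — so that $\phi(x,x^{i,j})\in\{-\tfrac\kappa2,0,\tfrac\kappa2\}$, while the boundary moves contribute $\phi=\pm\tfrac{\mu_L}2\mp\tfrac\kappa2$ at site $1$ and $\pm\tfrac{\mu_R}2\mp\tfrac\kappa2$ at site $N$. Under $\mu_L>\kappa>0>\mu_R$, $|\mu_R|>\kappa$ this gives: (i) from any $x\in\mathcal P$ the energy-raising domain-wall switch $\dots1110\dots\mapsto\dots1101\dots$ (and its mirror) is the unique cheapest move, so $\Gamma(x)=\tfrac\kappa2$ and the switch has $U=0$, whereas from the switched state in $\mathcal E$ the reverse switch is the unique preferred successor (Lemma \ref{bep:bep3}), so detaching the freed particle one further vacancy costs $U=\kappa$; (ii) once an isolated particle is at distance $\ge2$ from every block and from the boundary, its translation through vacancies is energy-neutral with $\Gamma=0$, hence $U=0$; (iii) annihilation of an isolated particle on site $N$ carries the favourable factor $e^{\beta|\mu_R|/2}$ and is the preferred successor, so it too has $U=0$; (iv) birth at $N$ and death at $1$ never undercut $\kappa$ as a way of leaving $\mathcal A_x$ and, moreover, after a birth at $N$ the only cost-zero continuation returns to $x$. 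I would state (i)--(iv) as a lemma; verifying them is just the bookkeeping above.

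The upper bounds are then explicit. For $\mathcal U(\eta_{p+1,q-1},\eta_{p,q})$: switch at the right end of the block ($U=0$), push the freed particle one vacancy to the right ($U=\kappa$), drift it to site $N$ ($U=0$), annihilate ($U=0$); the terminal state is $\eta_{p,q}$, so $\mathcal U(\eta_{p+1,q-1},\eta_{p,q})\le\kappa$. For $\mathcal U(\eta_{p,q},\eta_{p+1,q-1})$ one runs a reparametrisation of this excursion the other way — birth at $N$, import of the particle along the vacancy string to the right edge of the block, absorption — and checks from the table that again exactly one edge carries positive $U$; this direction needs more care because the particle must be brought in from the right reservoir, and confirming that the count comes out to $\kappa$ and not more is a point I would check carefully. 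The first displayed identity of the second part is the same construction with the spectator blocks $(p_0,q_0,\dots,p_{n-1},q_{n-1})$ carried along untouched, the particle shed from $p_n$ being routed through $q_n$ and out at $N$. The iterated identity is then a concatenation: decrementing the last block from $p_n$ down to $2$ is $p_n-2$ such excursions, each contributing $\kappa$, after which the residual block of size two dissolves into $q_{n-1}'$ along preferred successors at no further cost, so $\mathcal U\big((p_0,\dots,p_n,q_n),(p_0,\dots,q_{n-1}')\big)\le(p_n-2)\kappa$.

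For the lower bounds I invoke the attractor picture. By Corollary \ref{ax} and Lemma \ref{bep:bep4}, $\eta_{p,q}$, $\eta_{p+1,q-1}$ and each intermediate $(p_0,\dots,k,q_n+(p_n-k))$ with $k\ge3$ span pairwise disjoint attractors, while the configuration whose last block has size $2$ lies in the basin of $\mathcal A_{(p_0,\dots,q_{n-1}')}$. Any path from one attractor into a different one must traverse an edge leaving the first, and by \eqref{bep:bep9} that edge costs at least $\kappa$; hence a path realising $\mathcal U(\eta_{p,q},\eta_{p+1,q-1})$ pays $\ge\kappa$, and a path realising $\mathcal U\big((p_0,\dots,p_n,q_n),(p_0,\dots,q_{n-1}')\big)$ must leave each of the $p_n-2$ successive attractors, paying $\ge(p_n-2)\kappa$ in all. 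Matching with the upper bounds closes the proof. The genuinely delicate point — where I would spend the most effort — is this lower bound: one must be sure no cheaper detour exists, i.e.\ that births/deaths and block rearrangements cannot shave off part of the $\kappa$ cost of an attractor crossing. This is exactly where $\mu_L>\kappa$ and $|\mu_R|>\kappa$ are used, and where one leans on the mechanism in the proof of Proposition \ref{bep:bep2}: any escape attempt other than the $\kappa$-costly switch is forced back into the same attractor along preferred successors before it can reach a neighbouring one.
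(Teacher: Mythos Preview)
Your overall strategy---lower bound from the attractor exit cost \eqref{bep:bep9}, upper bound from an explicit path---is exactly the paper's, and your construction for $\mathcal{U}(\eta_{p+1,q-1},\eta_{p,q})$ via detaching a particle and exporting it through the right boundary is sound. The genuine gap is in the other direction, $\mathcal{U}(\eta_{p,q},\eta_{p+1,q-1})$, where you propose ``birth at $N$, import of the particle along the vacancy string to the right edge of the block.'' That path does not achieve cost $\kappa$. From $\eta_{p,q}$ the preferred successor has $\phi=-\kappa/2$, while birth at site $N$ has $\phi=\mu_R/2$, so already the first edge carries $U=(\kappa+|\mu_R|)/2>\kappa$ under the standing hypothesis $|\mu_R|>\kappa$; the second edge (moving the newborn particle one site left) pays a further $|\mu_R|/2$ because the preferred successor from $(1^p0^{q-1}1)$ is annihilation right back at $N$. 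You correctly flagged this direction as delicate, but the problem is not a subtle count---the proposed path is simply the wrong one, and no amount of bookkeeping along it will bring the total down to $\kappa$.

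The paper's remedy is to route through the \emph{left} reservoir for this direction: from $\eta_{p,q}$ take the preferred switch to the $\mathcal{E}$-state $(1^{p-1}010^{q-1})$, then move the resulting \emph{hole} one step to the left (this is the unique non-preferred edge, of cost $\kappa$), after which the hole drifts leftward along preferred successors until it reaches site~$1$, where a birth (preferred, since $\mu_L>\kappa>0$) fills it and yields $\eta_{p+1,q-1}$. For $p=4$, $q=3$:
\[
1111000\;\to\;1110100\;\to\;1101100\;\to\;1011100\;\to\;0111100\;\to\;1111100.
\]
The asymmetry between the two directions is intrinsic to the model: to \emph{add} a particle cheaply one must use the boundary where creation is favored ($i=1$, $\mu_L>0$); to \emph{remove} one cheaply one uses the boundary where annihilation is favored ($i=N$, $\mu_R<0$). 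Once this is fixed, the rest of your outline---including the second displayed identity and its iteration by shedding particles from the last block through the right boundary---matches the paper's argument.
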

\begin{proof} 
Since $\eta_{p+1,q-1}\notin {\cal A}_{\eta_{p,q}}$, it follows from \eqref{bep:bep9} that $\mathcal{U}(\eta_{p,q},\eta_{p+1,q-1})\geq\kappa$. It thus suffices to make a path between the two states with total cost $\kappa$. We make it as follows, from $\eta_{p,q}\equiv (p,q)$ to $\eta_{p+1,q-1}\equiv (p+1,q-1)$, as illustrated for $p=4,q=3$: 
\[
(1111000)\rightarrow(1110100)\rightarrow(1101100)\rightarrow(1011100)\rightarrow(0111100)\rightarrow(1111100)
\]
Since all edges except the second one are along preferred successors the result follows.  The rest of the Lemma follows in the very same way.
\end{proof}
Recall the minimum $\Theta $ over trees as defined following \eqref{thet}.

\begin{proposition} \label{bep:bep7}
For all $x\in \mathcal{D}$, $ \Theta(x)= (\vert \mathcal{P}\vert-1)\,\kappa$.
 \end{proposition}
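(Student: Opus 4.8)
The plan is to pair the lower bound from \eqref{bcol:bcol1}, which already gives $U(\mathcal T_x)\geq(|\mathcal P|-1)\kappa$ for every $x\in\mathcal G_N$ and every in-tree $\mathcal T_x$, with a matching upper bound obtained by displaying one in-tree to $x$ of cost exactly $(|\mathcal P|-1)\kappa$. Since $\Theta(x)=\min_{\mathcal T}U(\mathcal T_x)$, this yields equality. The underlying picture is the one produced by Corollary \ref{ax}, Lemma \ref{bep:bep4} and Proposition \ref{bep:bep2}: $\mathcal G_N$ splits into the $|\mathcal P|$ disjoint attractors $\mathcal A_z$, $z\in\mathcal P$, together with the non-attractor states, each of which reaches $\mathcal P$ along preferred successors; the only cost one cannot avoid is a single exit of cost $\kappa$ from each of the $|\mathcal P|-1$ attractors other than $\mathcal A_x$.

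Concretely, I would fix $x=\eta_{p,q}\in\mathcal D$ and assemble $\mathcal T_x$ from three layers. First, inside every attractor $\mathcal A_z$ ($z\in\mathcal P$) I orient each edge from a preferred successor of $z$ (an element of $\mathcal E$) toward $z$; these carry $U=0$ by Lemma \ref{bep:bep3}. Second, for each $y\notin\mathcal P\cup\mathcal E$ the construction in the proof of Lemma \ref{bep:bep4} gives a path of preferred successors from $y$ into $\mathcal P$, and taking its first edge as the parent of $y$ adds only zero-cost edges. Third, I connect the $|\mathcal P|$ ``super-nodes'' $\mathcal A_z$ into an in-tree rooted at $\mathcal A_x$: for $z=(p_0,q_0,\hdots,p_n,q_n)\in\mathcal P\setminus\mathcal D$ (so $n\geq1$), Lemma \ref{bep:bep5} provides a preferred path of total cost $\kappa$ from $z$ to the $\mathcal P$-state $(p_0,q_0,\hdots,p_n-1,q_n+1)$ when $p_n\geq4$, and to $(p_0,q_0,\hdots,p_{n-1},q_{n-1}+p_n+q_n)$ when $p_n=3$; iterating this ``reduction'' strictly decreases the pair $(n,p_n)$, hence terminates at some $\eta_{p',q'}\in\mathcal D$. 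Finally $\mathcal D=\{\eta_{3,N-3},\eta_{4,N-4},\hdots,\eta_{N-2,2}\}$ is a path whose consecutive vertices are joined at cost $\kappa$ by the first identity of Lemma \ref{bep:bep5}, and I orient that path toward $x$. The connecting layer then has $(|\mathcal P|-|\mathcal D|)+(|\mathcal D|-1)=|\mathcal P|-1$ edges, each of cost $\kappa$, while every other edge costs $0$, so $U(\mathcal T_x)=(|\mathcal P|-1)\kappa$ as required, matching the lower bound.

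The step that needs genuine care is verifying that the oriented graph glued from these three layers really is a spanning in-tree to $x$ — that each vertex $\neq x$ gets exactly one outgoing edge and that splicing the zero-cost routings of the first two layers onto the cost-$\kappa$ skeleton of the third creates no directed cycle and leaves no vertex disconnected. I expect to settle this by exhibiting a single complexity order that strictly decreases along every chosen edge: the preferred-successor steps toward $\mathcal P$ lower the number of ``bad'' blocks exactly as in Lemma \ref{bep:bep4}, the reductions lower $(n,p_n)$, and the $\mathcal D$-path is oriented toward $x$; acyclicity and connectedness follow, and a cycle-free oriented graph on $\mathcal G_N$ with $|\mathcal G_N|-1$ edges is a tree. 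It is precisely here that the hypothesis $x\in\mathcal D$ enters: the reduction edges all flow toward $\mathcal D$, so for a root in $\mathcal P\setminus\mathcal D$ one could not reach it from the $\mathcal D$-states at cost $\kappa$ per attractor, and $\Theta$ would in fact be strictly larger.
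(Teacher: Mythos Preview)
Your proposal is correct and follows essentially the same route as the paper: pair the lower bound \eqref{bcol:bcol1} with an explicit in-tree of cost $(|\mathcal P|-1)\kappa$, built from (i) the cost-$\kappa$ chain on $\mathcal D$ from Lemma \ref{bep:bep5}, (ii) the cost-$\kappa$ reductions from $\mathcal P\setminus\mathcal D$ toward $\mathcal D$, and (iii) zero-cost preferred paths from all remaining states into $\mathcal P$ via Proposition \ref{bep:bep2} and Lemma \ref{bep:bep4}.

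The only point worth sharpening is the gluing step you flag at the end. As written, your layers can assign two outgoing edges to the same vertex: e.g.\ for the path $(1111000)\to(1110100)\to(1101100)\to\cdots\to(1111100)$ in layer~3, the state $(1110100)\in\mathcal E$ is sent to $(1101100)$ by layer~3 but to $(1111000)$ by layer~1, producing a 2-cycle. The paper sidesteps this by reversing your order: it first lays down the layer-3 skeleton (the explicit paths between $\mathcal P$-states), keeps track of which intermediate vertices are already covered, and only \emph{then} fills in the leftover vertices with zero-cost edges. Your complexity-order argument handles acyclicity but not out-degree~1; the clean fix is simply to give layer~3 priority over layers~1 and~2 when they collide, after which your decreasing-complexity check does the rest.
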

\begin{proof} We construct a path from any vertex in $\mathcal{G}_N$ to $x=(p,q)$ such that the collection of all edges forms an in-tree $\mathcal{T}_x$ for which the minimum of $U(\mathcal{T}_x)$, $\Theta(x)$, is reached. \newline
To connect the states in $\mathcal{D}\backslash x$ with $x$, we use the construction of Lemma \ref{bep:bep5}.
The vertices in $\mathcal{D}$ can be totally ordered as $\eta_{3,N-3},\hdots , \eta_{N-2,2}$. The path from $y= (3,N-3)$ to $x= (p,q)$ will pass along all ${p',q'}$, $3< p' < p$, and similarly the path from $(N-2,2)$ to $(p,q)$ passes through all the remaining states in $\cal D$.
If we collect all the involved edges $(y,z)$ so far, their sum equals $\sum_{(y,z)} U(y,z) = (\vert \mathcal{D}\vert-1)\,\kappa$.\newline
The next step is to add all paths from vertices in $\mathcal{P}\backslash \mathcal{D}$ to $x=\eta_{p,q}$. Consider the set
\[
\mathcal{Z}_{x}=\{y\in \mathcal{P}\backslash\mathcal{D} \vert y = (p,q_0,p_1,\hdots,p_n,q_n) \}
\]
where $p$ in $y$ is fixed. Obviously, extra variation over $p$ generates all of $\cal D$, or $\cup_{x\in \mathcal{D}} \mathcal{Z}_x = \mathcal{P}$. \newline 
Take any such $y\in \mathcal{Z}_{x}$. From the constructions in Lemma \ref{bep:bep5} it follows that
\[
U\left(D\left((p,q_0,\hdots , p_n,q_n),\hdots,(p,q_0,\hdots , p_{n-1},q_{n-1}')\right)\right)=(p_n-2)\,\kappa
\]
where $q'_{n-1}=p_n+q_n+q_{n-1}$. Iterating this from right to left, we get a path to $x$ with
\[
U\left(D\left((p,q_0,\hdots , p_n,q_n),\hdots,x\right)\right)=\sum_{i=1}^n(p_i-2)\kappa
\]
and note that all vertices on that path belong to $\mathcal{Z}_{x}$.  There may be others left, so take then
 $y'\in \mathcal{Z}_{x}$ with $y'$ not covered by that path. As before, we construct a path from $y'$ to $x$ but we stop at the first state $z$ along that path that was also on the previous path.  This procedure can now be repeated by consecutive choices of other states in $\mathcal{Z}_{x}$, always ensuring that we avoid overlap.  At the end, again a set of edges $(v,w)$ appears in which we have covered all $\mathcal{Z}_{x}$, and the total cost (in terms of sums over $U(v,w)$) equals $\vert \mathcal{Z}_x \vert \kappa$.
Let us now take $z\in \mathcal{Z}_{y}, y\neq x$, with $y\in {\cal D}$.  As above we make an oriented graph with all edges pointing to $y$.  Since $y$ is already connected to $x$, we are done.   As a consequence, the total cost for connecting all of $\mathcal{P}$ to $x\in \mathcal{D}$ is $\sum_{y\in \mathcal{D}}\vert \mathcal{Z}_y\vert \kappa + (|{\cal D}|-1)\,\kappa =(\vert \mathcal{P} \vert -1)\kappa$.\newline
  Finally, there are states that have not been covered but surely they do not belong to ${\cal P}$.  By Proposition \ref{bep:bep2} it follows that for any such state $y$, there exists a vertex $z\in \mathcal{P}$ such that $\mathcal{U}(y_i,z_i)=0$. Since any $z\in \mathcal{P}$ is connected via a path to $x$, we can extend the already defined paths in the graph to a spanning tree. The conclusion then follows from the fact that $(\vert \mathcal{P}\vert-1)\kappa$ is also a lower bound for $U(\mathcal{T}_x)$ as was written in \eqref{bcol:bcol1}.
\end{proof} 

Denote by $\mathfrak{T}_x=\arg\min_{\mathcal{T}}U(\mathcal{T}_x)$ the minimum over trees.
Let $x\in \mathcal{P}$, then all states in the $\vert \mathcal{P}\vert$ attractors are connected to $x$ via a unique path $D$ in some $\mathcal{T}_x$. That means that there is (at least one) edge $(v,w)\in D$ where the path leaves an attractor. Let $\mathcal{B}_{\mathcal{T}_x}=\{(v,w)\in \mathcal{T}_x: \exists y\in \mathcal{P}\backslash x: v\in A_y, w\notin A_y\}$ be the set of all such edges in $\mathcal{T}_x$. From the definition of $A_y$ it follows that any 
of these edges have the property $U(v,w)\geq \kappa$, which represents the cost to leave the attractor along $(v,w)$. Since there are $\vert \mathcal{P}\vert$ attractors, $\vert \mathcal{B}_{\mathcal{T}_x}\vert\geq \vert \mathcal{P}\vert-1$. From Proposition \ref{bep:bep7} it follows that $\mathcal{B}_{\mathfrak{T}_x}=\{(v_1,w_1),\hdots, (v_{\vert \mathcal{P}\vert-1},w_{\vert \mathcal{P}\vert-1}) \}$ for $x\in \mathcal{D}$. This also shows that $\mathcal{B}_{\mathfrak{T}_x}$ is the set of edges such that $U(v,w)\neq 0$ in $\mathfrak{T}_x$ for $x\in \mathcal{D}$ and we claim that there are no other states than those in $\mathcal{D}$ that share this property.

\begin{lemma} \label{bep:bep10}
For all $y\in \mathcal{P}\backslash\mathcal{D}$, there exists an edge $(v,w)\in \mathfrak{T}_y:$ $U(v,w)\geq \kappa$ and $(v,w)\notin \mathcal{B}_{\mathfrak{T}_y}$. 
\end{lemma}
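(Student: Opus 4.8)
The plan is the following. By the counting remark just before the lemma, producing one edge $(v,w)\in\mathfrak T_y$ with $U(v,w)\ge\kappa$ and $(v,w)\notin\mathcal B_{\mathfrak T_y}$ gives $U(\mathfrak T_y)\ge(|\mathcal P|-1)\kappa+\kappa$, i.e. $\Theta(y)>(|\mathcal P|-1)\kappa$; combined with the fact that $\Gamma$ is constant on $\mathcal P$ (every $x\in\mathcal P$ escapes fastest through a single nearest-neighbour exchange across a domain wall) and with Proposition~\ref{bep:bep7}, this yields $\Psi(y)=\Gamma(y)-\Theta(y)<\Gamma(y)-(|\mathcal P|-1)\kappa=\Psi(x)$ for $x\in\mathcal D$, so that no $y\in\mathcal P\setminus\mathcal D$ is dominant, as Theorem~\ref{thm:thm1} requires. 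Thus the whole content is to locate that one extra expensive edge.

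The bookkeeping device is the block number $\ell(z):=n+1$ of a state $z=(p_0,q_0,\dots,p_n,q_n)$ written as in \eqref{nota}, so that $\mathcal D=\{\ell=1\}$ and $\mathcal P\setminus\mathcal D=\{\ell\ge 2\}$. First I would record a refinement of Lemma~\ref{bep:bep5}, proved in the same way by exhibiting preferred paths and counting how occupied-occupied bonds change under a single exchange: a ``one-hop'' passage from an attractor $\mathcal A_z$ to an attractor $\mathcal A_{z'}$ costs at least $\kappa$, and it costs exactly $\kappa$ only when $\ell(z')\le\ell(z)$. Indeed, the cheapest way out of $\mathcal A_z$ is to follow a preferred edge into one of its $\mathcal E$-states and then make one exchange of cost $\kappa$, landing in a configuration carrying a single isolated particle (or a width-$1$ bulk gap); from there the $\ell$-non-increasing neighbours (the moves $\eta_{p,q}\leftrightarrow\eta_{p\pm1,q\mp1}$ and their interior analogues, and the block-dissolutions, which cost $(p_n-2)\kappa\ge\kappa$) are reached within that single charge, whereas building an \emph{admissible} new occupied block of size $\ge3$ together with its obligatory bulk gap of width $\ge3$ still requires transporting three particles out of (dually, three vacancies into) an existing block, each detachment being charged $\kappa$ separately, so the cheapest $\ell$-increasing passage costs strictly more than $\kappa$ (in fact at least $3\kappa$).

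With this in hand, suppose for contradiction that $\mathfrak T_y$ contained no edge as in the statement. Then every $U$-positive edge of $\mathfrak T_y$ lies in $\mathcal B_{\mathfrak T_y}$; since $|\mathcal B_{\mathfrak T_y}|\ge|\mathcal P|-1$ with each such edge costing $\ge\kappa$, while $\Theta(y)\ge(|\mathcal P|-1)\kappa$ by \eqref{bcol:bcol1}, we get $U(\mathfrak T_y)=(|\mathcal P|-1)\kappa$ with exactly $|\mathcal P|-1$ positive edges, each of cost exactly $\kappa$, all of them attractor-exits. Contracting each attractor to a point turns $\mathfrak T_y$ into a spanning in-arborescence of the ``weight-$\kappa$'' attractor graph rooted at $\mathcal A_y$; by the refinement above all its edges are $\ell$-non-increasing, so every $z\in\mathcal P$ reaches $y$ along such edges, forcing $\ell(y)=\min_z\ell(z)=1$, hence $y\in\mathcal D$, a contradiction. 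Therefore $\mathfrak T_y$ carries surplus cost. Because the effective cost of detaching any attractor from the rest is exactly $\kappa$, a \emph{minimal} in-tree is never forced into a more expensive attractor-exit; the surplus must instead sit on the ``block-creation run'', i.e. on the portion of the path in $\mathfrak T_y$ joining some $\eta_{p,q}\in\mathcal D$ to $y$ where $\ell$ actually increases. That portion passes through transient configurations (isolated-particle or narrow-gap states) that lie in neither $\mathcal P$ nor $\mathcal E$, hence in no attractor, and on it at least one exchange --- the second detachment of a particle, equivalently the first widening of the new bulk gap from width $1$ to width $2$ --- has $U$-cost $\ge\kappa$ while leaving no attractor. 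This edge is the required $(v,w)$.

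The hard part is the refinement in the middle paragraph: proving that an admissible new block-plus-gap genuinely cannot be produced for a single charge of $\kappa$, and then that in a minimal $\mathfrak T_y$ the resulting, unavoidable surplus attaches to a transition between two non-attractor (transient) states rather than being absorbed into one over-priced attractor-exit. Both rest on the same energy accounting along preferred paths as Lemma~\ref{bep:bep5} --- tracking occupied-occupied bonds under a domain-wall exchange, together with the dichotomy ``detach a particle from a block $=$ cost $\kappa$, slide an isolated particle or merge $=$ free'' --- but they demand a careful enumeration of which perturbed configurations belong to $\mathcal P\cup\mathcal E$ and which do not, and the verification that Proposition~\ref{bep:bep2} supplies the remaining, cost-$0$, hook-ups to $\mathcal P$ for all the transient states produced along the way.
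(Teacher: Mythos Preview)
Your route is genuinely different from the paper's and considerably more elaborate. The paper does not argue by contradiction or by any global monotonicity of the block number. It simply fixes, for a given $y=(p,q_0,p_1,\dots,q_n)\in\mathcal P\setminus\mathcal D$, the state $x=(p,N-p)\in\mathcal D$ sharing the same first block, and asserts (for the base case $n=1$, then by an obvious induction) that \emph{every} path from $x$ to $y$ contains an edge $(v,w)$ with $U(v,w)\ge\kappa$ and with both $v,w$ lying outside every attractor; since $\mathfrak T_y$ contains one such path, the lemma follows. The whole argument is local to a single path and never touches the tree optimisation.

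Your proposal has two real gaps. First, the inference ``no edge with $U\ge\kappa$ outside $\mathcal B$'' $\Rightarrow$ ``every $U$-positive edge of $\mathfrak T_y$ lies in $\mathcal B$'' is false in this model: states outside $\mathcal P\cup\mathcal E$ admit transitions with $0<U<\kappa$ (for example, from $v=1110010\ldots0$ the swap at the right edge of the length-$3$ block has $U=\kappa/2$, because the isolated particle provides a cost-$0$ escape that sets $\Gamma(v)=0$). So the negation of the lemma does not force the clean picture ``exactly $|\mathcal P|-1$ cost-$\kappa$ attractor-exits and nothing else'', and your contracted-arborescence contradiction does not close. Second, the fallback paragraph---that in a minimal in-tree the unavoidable surplus cannot sit on an over-priced attractor-exit and must therefore land on a transient-to-transient edge---is heuristic: tree constraints can force locally suboptimal exits, and you have not shown that the cheap exits you want are simultaneously realisable as an in-tree to $y$. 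You flag both the $\ell$-monotonicity refinement and this surplus-placement claim as ``the hard part''; the paper avoids both entirely by never reasoning about the whole tree, only about one obligatory path in it.
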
									
\begin{proof}
We show that given $y\in \mathcal{P}\backslash \mathcal{D}$ there is $x\in \mathcal{P}$ such that along any path $D(x,\hdots,y)$ an edge $(v,w)$ exists such that $U(v,w)\geq \kappa$ where both $v,w$ are not in an attractor $A_y$. \newline
Let $y=(p,q_0,p_1,q_1)$ where $p$ in $y$ is fixed and $x=(p,q)$. It is easily checked that any path $D$ from $x$ to $y$ contains such edge (whereas the opposite is not true). Then it is certainly true for $y=(p,q_0,\hdots,p_n,q_n)$, $n>1$. Variation over $p$ generates all of $\mathcal{P}\backslash \mathcal{D}$.
\end{proof}

\begin{lemma} \label{bep:bep11}
For all $y\in \mathcal{P}\backslash \mathcal{D}$ and $x\in \mathcal{D}:$ $\Theta(y)>(|\mathcal{P}|-1)\kappa=\Theta(x)$
\end{lemma}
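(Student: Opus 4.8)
The idea is to take the lower bound \eqref{bcol:bcol1} that already gives $U(\mathcal{T}_y)\ge (|\mathcal{P}|-1)\,\kappa$ for every in-tree, and to show that for $y\in\mathcal{P}\setminus\mathcal{D}$ the minimizing tree $\mathfrak{T}_y$ carries one \emph{extra} costly edge, beyond the $|\mathcal{P}|-1$ attractor-leaving edges, so that the bound is pushed strictly above $(|\mathcal{P}|-1)\kappa$. Since $\Theta(x)=(|\mathcal{P}|-1)\,\kappa$ for $x\in\mathcal{D}$ by Proposition \ref{bep:bep7}, this is exactly what is needed.

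First I would recall that $\Theta(y)=U(\mathfrak{T}_y)$ by the very definition of $\mathfrak{T}_y$ as the argmin. Since $y\in\mathcal{P}$, Corollary \ref{ax} says $y$ spans its own attractor $A_y$, and Lemma \ref{bep:bep4} says there are exactly $|\mathcal{P}|$ attractors in $\mathcal{G}_N$. The tree $\mathfrak{T}_y$ connects every vertex, hence every attractor, to $y$; to route a path out of any attractor $A_z$ with $z\in\mathcal{P}\setminus\{y\}$ one must use an edge $(v,w)$ with $v\in A_z$, $w\notin A_z$, which by \eqref{bep:bep9} has $U(v,w)\ge\kappa$. Collecting these edges in the set $\mathcal{B}_{\mathfrak{T}_y}$ (as in the paragraph preceding Lemma \ref{bep:bep10}) gives $|\mathcal{B}_{\mathfrak{T}_y}|\ge |\mathcal{P}|-1$, each contributing at least $\kappa$.

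The new input is Lemma \ref{bep:bep10}: because $y\in\mathcal{P}\setminus\mathcal{D}$, there is an edge $(v_0,w_0)\in\mathfrak{T}_y$ with $U(v_0,w_0)\ge\kappa$ that does \emph{not} belong to $\mathcal{B}_{\mathfrak{T}_y}$. Since every $U(\cdot,\cdot)\ge 0$, we may bound
\[
\Theta(y)=U(\mathfrak{T}_y)=\sum_{(u,u')\in\mathfrak{T}_y}U(u,u')\ \ge\ \sum_{(u,u')\in\mathcal{B}_{\mathfrak{T}_y}}U(u,u')+U(v_0,w_0)\ \ge\ (|\mathcal{P}|-1)\,\kappa+\kappa=|\mathcal{P}|\,\kappa .
\]
As $\kappa>0$ in the regime under consideration, $|\mathcal{P}|\,\kappa>(|\mathcal{P}|-1)\,\kappa=\Theta(x)$ for every $x\in\mathcal{D}$ (Proposition \ref{bep:bep7}), which is the claimed strict inequality.

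\textbf{Main obstacle.} The delicate point is the bookkeeping that makes the extra edge genuinely additive: one must be sure that $(v_0,w_0)$ is not already one of the $|\mathcal{P}|-1$ attractor-leaving edges counted in $\mathcal{B}_{\mathfrak{T}_y}$, and that Lemma \ref{bep:bep10} is being applied to the \emph{same} minimizing tree $\mathfrak{T}_y$ used to lower-bound $|\mathcal{B}_{\mathfrak{T}_y}|$. Both are guaranteed by the statement of Lemma \ref{bep:bep10}, so once the edge sets of $\mathfrak{T}_y$ are organized as above, non-negativity of $U$ finishes the argument with no further computation.
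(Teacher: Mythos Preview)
Your argument is correct and follows essentially the same route as the paper: combine the lower bound $|\mathcal{B}_{\mathfrak{T}_y}|\ge|\mathcal{P}|-1$ with the extra edge supplied by Lemma~\ref{bep:bep10} to push $\Theta(y)$ strictly above $(|\mathcal{P}|-1)\kappa=\Theta(x)$. Your write-up is in fact a bit more explicit than the paper's, since you spell out why the additional edge is not double-counted and obtain the slightly sharper bound $\Theta(y)\ge|\mathcal{P}|\,\kappa$.
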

\begin{proof}
Let $y\in \mathcal{P}\backslash \mathcal{D}$. Since $\vert \mathcal{B}_{\mathfrak{T}_y}\vert\geq|\mathcal{P}|-1$ and using Lemma \ref{bep:bep10}, 
\be \nonumber 
\Theta(y)=U(\mathfrak{T}_y)&\geq& \sum_{i}^{|\mathcal{P}|-1} U(v_i,w_i)+U(v,w) \\ \nonumber
&>&(|\mathcal{P}|-1)\kappa=\Theta(x).
\ee
which holds for any in-tree $\mathcal{T}_y$.
\end{proof}

\begin{lemma}
For all $x\in \mathcal{G}\backslash \mathcal{D}$ and for all trees $\mathcal{T}, \Theta(x)\geq (|\mathcal{P}|-1)\kappa$.
\end{lemma}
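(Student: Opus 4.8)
The plan is to recognise that this inequality is not new: it is precisely \eqref{bcol:bcol1}, whose derivation never used that $x\in\mathcal{D}$. The argument preceding \eqref{bcol:bcol1} produces, for \emph{any} $x\in\mathcal{G}_N$ and any in-tree $\mathcal{T}_x$, the bound $U(\mathcal{T}_x)\geq(|\mathcal{P}|-1)\kappa$; specialising to $x\in\mathcal{G}_N\setminus\mathcal{D}$ and then minimising over trees gives $\Theta(x)=\min_{\mathcal{T}}U(\mathcal{T}_x)\geq(|\mathcal{P}|-1)\kappa$. So the proof is essentially one line, but it is worth re-running the mechanism since it is the same counting that closes Theorem \ref{thm:thm1}.

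Concretely: fix $x\in\mathcal{G}_N\setminus\mathcal{D}$ and an in-tree $\mathcal{T}_x$. By Corollary \ref{ax} there are exactly $|\mathcal{P}|$ attractors $\mathcal{A}_z$, $z\in\mathcal{P}$, and they are pairwise disjoint; hence $x$ lies in at most one of them, so at least $|\mathcal{P}|-1$ of them avoid $x$. For each such $\mathcal{A}_z$, the unique directed path in $\mathcal{T}_x$ from any vertex of $\mathcal{A}_z$ to $x$ must cross some edge $(u_z,v_z)$ with $u_z\in\mathcal{A}_z$, $v_z\notin\mathcal{A}_z$; by \eqref{bep:bep9} this edge carries $U(u_z,v_z)\geq\kappa$, and the edges obtained for distinct $z$ are themselves distinct because their tails $u_z$ lie in the disjoint sets $\mathcal{A}_z$. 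Summing these $\geq|\mathcal{P}|-1$ contributions, $U(\mathcal{T}_x)=\sum_{(u,v)\in\mathcal{T}_x}U(u,v)\geq(|\mathcal{P}|-1)\kappa$, and since the in-tree was arbitrary, $\Theta(x)\geq(|\mathcal{P}|-1)\kappa$, which is the claim.

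The only genuinely delicate input — and thus the place I expect an obstacle — is \eqref{bep:bep9}, i.e. that leaving an attractor always costs at least $\kappa$ in $U$; everything else above is bookkeeping. This is exactly where the structure from Lemmas \ref{bep:bep3}--\ref{bep:bep4} is needed: an attractor is a core $z\in\mathcal{P}$ together with the one-switch states of $\mathcal{E}$ into which $z$ maps, and to move to a configuration from which no preferred path returns to $\mathcal{A}_z$ one must either reverse the energetically favourable reconnection of a switch (cost $\kappa$) or perform a boundary birth/death, which for $\mu_L,|\mu_R|>\kappa$ is strictly costlier than a switch; in either case $U\geq\kappa$. (If one wants a completely robust bookkeeping, replace each $\mathcal{A}_z$ by its basin — the states reaching $\mathcal{A}_z$ along preferred successors — and note the path pays $\geq\kappa$ on the edge where it finally leaves that basin.) Granting \eqref{bep:bep9}, the lemma is immediate, and combined with Proposition \ref{bep:bep7} ($\Theta=(|\mathcal{P}|-1)\kappa$ on $\mathcal{D}$), Lemma \ref{bep:bep11} ($\Theta>(|\mathcal{P}|-1)\kappa$ on $\mathcal{P}\setminus\mathcal{D}$) and the forthcoming lifetime comparison ($\Gamma\leq\kappa/2$, with equality throughout $\mathcal{D}$), it forces $\Psi(x)=\Gamma(x)-\Theta(x)$ strictly below the common value $\kappa/2-(|\mathcal{P}|-1)\kappa$ attained on $\mathcal{D}$ for every $x\notin\mathcal{D}$, which is Theorem \ref{thm:thm1}.
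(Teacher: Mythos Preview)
Your proof of the lemma \emph{as stated} is correct: the bound \eqref{bcol:bcol1} was derived for arbitrary $x\in\mathcal{G}_N$, so specialising to $x\in\mathcal{G}_N\setminus\mathcal{D}$ and minimising over trees immediately gives $\Theta(x)\geq(|\mathcal{P}|-1)\kappa$. Your counting of the $\geq|\mathcal{P}|-1$ exit edges, one per attractor not containing $x$, is exactly the mechanism behind \eqref{bcol:bcol1}.

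The paper, however, does not take this shortcut. It runs a three-way case split: (i) for $x\notin\mathcal{P}\cup\mathcal{E}$ (not in any attractor) it gets the strictly stronger $\Theta(x)\geq|\mathcal{P}|\kappa$; (ii) for $x\in\mathcal{P}\setminus\mathcal{D}$ it quotes Lemma~\ref{bep:bep11} to obtain the strict inequality $\Theta(x)>(|\mathcal{P}|-1)\kappa$; (iii) for $x\in\mathcal{E}$ it shows $\Theta(x)=\Theta(y)$ for the unique $y\in\mathcal{P}$ sharing the attractor, by swapping the edge $(x,y)$ for $(y,x)$ in the optimal in-tree. So the paper's argument is longer but yields finer information than the bare inequality of the lemma.

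That finer information matters downstream, and this is where your closing sketch is a bit too quick. To conclude Theorem~\ref{thm:thm1} one needs $\Psi(y)<\Psi(x)$ \emph{strictly} for all $y\notin\mathcal{D}$, and your inequalities $\Gamma(y)\leq\kappa/2$ and $\Theta(y)\geq(|\mathcal{P}|-1)\kappa$ only give $\Psi(y)\leq\Psi(x)$. The paper closes this by knowing \emph{which} of the two inequalities is strict in each case: outside all attractors $\Theta$ is strictly larger (case (i)); inside $\mathcal{P}\setminus\mathcal{D}$ $\Theta$ is strictly larger (case (ii)); inside $\mathcal{E}$ the lifetime $\Gamma$ is strictly smaller while $\Theta$ matches the attractor's $\mathcal{P}$-state (case (iii)). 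Your one-line proof establishes the lemma, but if you want to carry it through to Theorem~\ref{thm:thm1} you will still need the case distinctions the paper makes.
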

\begin{proof} 
There are three cases,
\begin{enumerate}
\item For $x\notin \mathcal{E}\cup \mathcal{P}$, i.e. $x$ lies not in an attractor. Then $\vert\mathcal{B}_{\mathfrak{T}_x}\vert\geq |\mathcal{P}|$ so that $\Theta(x)\geq |\mathcal{P}|\kappa$. 
\item For $x\in\mathcal{P}\backslash \mathcal{D}$, the claim follows from Lemma \ref{bep:bep11}.
\item For $x\in \mathcal{E}$. Denote $A_x$ the attractor in which $x$ lies, then there exists a $y\in \mathcal{P}: y\in A_x$. Since $x$ is a preferential successor to $y$ (and vice-versa), $\mathcal{T}_y\backslash (x,y) \cup (y,x)$ extends to an in-tree $\mathcal{T}_x$ in $x$ such that 
\be\nonumber
 U(\mathfrak{T}_x)&=&U(\mathfrak{T}_y\backslash (x,y))+U(y,x)\\\nonumber
&=&U(\mathfrak{T}_y).
\ee 
Since $U(\mathfrak{T}_{y\in \mathcal{P}})\geq(|\mathcal{P}|-1)\kappa$ the claim is proved.
\end{enumerate}
\end{proof}

We can now finish the proof of Theorem 1.  Since both $\Gamma(x)$ and $\Theta(x)$ are constant on ${\cal D}$, we have that
$\Psi$ is constant on $\cal D$, i.e., $\Psi(x)=\Psi(y)$ for $x,y\in {\cal D}$.
On the other hand, the accessibility $U(\mathcal{T}_x)$ is minimal for states in $\mathcal{D}$, and states that lie in attractors spanned by states in $\mathcal{D}$. Finally, the lifetime $\Gamma(x)$ is maximal for states in $\mathcal{P}$.  As $\mathcal{D}\subset \mathcal{P}$ the states in $\mathcal{D}$ have maximal occupation:
 $\Psi(x)>\Psi(y)$ for all $x\in {\cal D}, y \in {\mathcal G}_N\setminus {\cal D}$.  That concludes the proof.

\section{Zero coupling: boundary driven exclusion process}\label{sse}
When $\kappa=0$, the only interaction is that of on-site exclusion. That exclusion process enjoys a matrix representation. In \cite{der2} it was shown that the probability of a configuration $x=(x(1),x(2),\hdots, x(N))$ can be written as 
\be\label{mx}
\rho\left(x\right)=\frac{ \langle W \vert X_1 \hdots X_n \vert V\rangle}{  \langle W \vert (D+E)^N\vert V\rangle }
\ee
where the matrix $X_i$ depends on the occupation $x(i)$ of site $i$ by
\[
X_i= x(i) D + (1-x(i))E
\]
and the matrices $D$ and $E$ satisfy the algebraic rules
\be
&&DE-ED=D+E\nonumber\\
&&\langle W \vert(\alpha E - \gamma D)=\langle W\vert \label{rux}\\\nonumber
&& (\sigma D -\delta E )\vert V \rangle =\vert V\rangle .
\ee
In our model, $\alpha(\beta)= e^{\beta \mu_L}$, $\gamma(\beta)=e^{-\beta \mu_L}$, $\sigma(\beta)=e^{-\beta \mu_R}$ and $\delta(\beta)=e^{\beta \mu_R}$ where we of course insisted on the physical dependence on the environment temperature and we fix  
$\mu_L >0 > \mu_R$ as we take $\beta$ to infinity, which is also the case for Theorem \ref{thm:thm1}.

\begin{theorem} For $N\geq 2$ and with $\mu_L >0 > \mu_R$, 
\be
\rho(x)\asymp 1 \mbox{ iff } x(1)=1, x(N)=0
\ee
\end{theorem}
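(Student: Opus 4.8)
The plan is to feed the matrix product representation \eqref{mx}--\eqref{rux} into an induction on the size $N$, tracking only the exponential order in $\beta$ of each weight $f_N(x):=\langle W|X_1\cdots X_N|V\rangle$ and of the normalisation $Z_N:=\langle W|(D+E)^N|V\rangle$, so that $\rho(x)=f_N(x)/Z_N$. Three facts organise the bookkeeping. (i) The boundary rules in \eqref{rux} read $\langle W|E=\alpha^{-1}\langle W|+(\gamma/\alpha)\langle W|D$ and $D|V\rangle=\sigma^{-1}|V\rangle+(\delta/\sigma)E|V\rangle$; with $\alpha=e^{\beta\mu_L}$, $\gamma=e^{-\beta\mu_L}$, $\sigma=e^{-\beta\mu_R}$, $\delta=e^{\beta\mu_R}$ and $\mu_L>0>\mu_R$, the four prefactors $\alpha^{-1},\gamma/\alpha,\sigma^{-1},\delta/\sigma$ all vanish exponentially fast, so absorbing an $E$ into $\langle W|$, or a $D$ into $|V\rangle$, costs a factor exponentially small in $\beta$. (ii) The bulk relation $DE=ED+D+E$ --- and every identity obtained from it by iteration, in particular $D^aE=ED^a+M_a$ with $M_a$ a non-negative combination of words of length $\le a$ that contains $D^a$ itself --- has non-negative coefficients, so no cancellations occur. (iii) Since the same $D,E,\langle W|,|V\rangle$ serve every size, $\langle W|(\text{any word of length }k)|V\rangle$ equals $Z_k$ times the corresponding size-$k$ stationary weight and hence lies in $[0,Z_k]$; moreover, using $D(D+E)^k=(D+E)^k(D+kI)$ and $E(D+E)^k=(D+E)^k(E-kI)$ together with the boundary rules one obtains $Z_N/Z_{N-1}\to N-1$, so $Z_k\asymp Z_{k-1}$ for every $k\ge1$.

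From (i) and (iii) the easy half follows. If $x(1)=0$ then $X_1=E$, so
\[
f_N(x)=\alpha^{-1}f_{N-1}(x(2),\dots,x(N))+(\gamma/\alpha)\,f_N(1,x(2),\dots,x(N))\le(\alpha^{-1}+\gamma/\alpha)\max(Z_{N-1},Z_N),
\]
whence $\rho(x)\lesssim e^{-\beta\mu_L}$; symmetrically, if $x(N)=1$ then absorbing the last $D$ into $|V\rangle$ produces $\sigma^{-1}=e^{-\beta|\mu_R|}$ and $\rho(x)\lesssim e^{-\beta|\mu_R|}$. So every $x$ with $x(1)=0$ or $x(N)=1$ has $\rho(x)\asymp e^{-\beta\alpha}$ with $\alpha\ge\min(\mu_L,|\mu_R|)>0$.

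It remains to prove $f_N(x)\asymp Z_N$, i.e.\ $\rho(x)\asymp1$, for every $x$ with $x(1)=1$, $x(N)=0$, by induction on $N$. For $N=2$ the only such $x$ is $(1,0)$, and expanding $\langle W|DE|V\rangle=\langle W|(ED+D+E)|V\rangle$ and using (i) gives $f_2(1,0)=\langle W|D|V\rangle+\langle W|E|V\rangle+o(Z_2)=Z_1+o(Z_2)\asymp Z_2$. For the inductive step, fix such an $x$ with $N\ge3$; there are three cases. If $x(2)=0$, write the word as $DE\cdot(\mathrm{rest})$ and use $DE=ED+D+E$: the term $D\cdot(\mathrm{rest})$ equals $f_{N-1}(1,x(3),\dots,x(N))$, again of the good type and of size $N-1$, hence $\asymp Z_{N-1}\asymp Z_N$ by induction, while the two other terms pick up an $\alpha^{-1}$ or $\gamma/\alpha$ from absorbing an $E$ into $\langle W|$ and are therefore exponentially smaller than $Z_N$; since all three are non-negative, $f_N(x)\asymp Z_N$. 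If $x(2)=1$ and $x(N-1)=1$, the mirror argument at the right end (apply $DE=ED+D+E$ to the last two letters and absorb the extra $D$'s into $|V\rangle$ at cost $\sigma^{-1}$ or $\delta/\sigma$) isolates $f_{N-1}(x(1),\dots,x(N-2),0)\asymp Z_N$ plus exponentially smaller terms. In the last case $x(2)=1$, $x(N-1)=0$, i.e.\ $x=(1,1,\dots,0,0)$, the word is $D^aE\,Y$ with $a\ge2$ and $Y$ a non-empty word ending in $E$; commuting that $E$ leftward through $D^a$,
\[
\langle W|D^aEY|V\rangle=\langle W|ED^aY|V\rangle+\langle W|M_aY|V\rangle,
\]
the first term is exponentially smaller than $Z_N$ (it has an $\alpha^{-1}$ prefactor after absorbing $E$ into $\langle W|$), while $\langle W|M_aY|V\rangle\ge\langle W|D^aY|V\rangle$ and $D^aY$ is a good configuration of size $N-1$, so $\langle W|M_aY|V\rangle\gtrsim Z_{N-1}\asymp Z_N$; with the universal bound $f_N(x)\le Z_N$ this gives $f_N(x)\asymp Z_N$. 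Finally $\sum_{x:\,x(1)=1,x(N)=0}\rho(x)=1-o(1)$ with every summand $\asymp1$, which is the claim.

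The genuine obstacle is this last case $x=(1,1,\dots,0,0)$, where no $D$ sits next to $|V\rangle$ and no $E$ next to $\langle W|$ for an immediate peel-off; it is handled by transporting an $E$ through the leading $D$-block and exploiting the non-negativity in (ii) to expose a shorter good configuration. The auxiliary statement $Z_N\asymp Z_{N-1}$ in (iii), on which all the ratios $f_N/Z_N$ depend, is the remaining point that needs a short separate computation.
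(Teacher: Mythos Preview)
Your proof is correct and takes a genuinely different route from the paper's. The paper immediately sets $\gamma=\delta=0$, invokes the explicit closed form $Z_N/\langle W|V\rangle=\Gamma(N+\alpha^{-1}+\sigma^{-1})/\Gamma(\alpha^{-1}+\sigma^{-1})$, and proceeds by induction on the number $p$ of $D^{N_i}E^{M_i}$ blocks, expanding each block via the commutation rule and tracking only the dominant $a_1(\alpha^{-1}+\sigma^{-1})$ term. You instead keep $\gamma,\delta$ finite, argue inductively on $N$, and peel one letter at a time from a boundary using the relations $\langle W|E=\alpha^{-1}\langle W|+(\gamma/\alpha)\langle W|D$ and $D|V\rangle=\sigma^{-1}|V\rangle+(\delta/\sigma)E|V\rangle$; the ``hard'' case where neither boundary is immediately peelable you resolve by pushing the first $E$ through the leading $D^a$ block, exploiting the non-negativity of the resulting coefficients to bound $\langle W|M_aY|V\rangle\ge\langle W|D^aY|V\rangle$.

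What each approach buys: the paper's use of the explicit $Z_N$ formula makes the normalisation comparison instantaneous but relies on an external exact result (equation (57) of \cite{der2}) and is somewhat informal about replacing $\gamma,\delta$ by $0$. Your argument is self-contained and more robust --- the key input $Z_N/Z_{N-1}\to N-1$ follows in a couple of lines from the commutation $D(D+E)^k=(D+E)^k(D+kI)$ together with the boundary rules, as you note --- at the cost of a small case analysis in the inductive step. Your treatment of the ``easy half'' ($x(1)=0$ or $x(N)=1$) is also more direct than the paper's block expansion. One cosmetic point: in your last case the description ``$x=(1,1,\dots,0,0)$'' should be read as ``$x$ begins with at least two $1$'s and ends with at least two $0$'s,'' not as a fully segregated state, but your argument as written only uses $a\ge2$ and that $Y$ ends in $E$, so it covers the general situation.
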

\begin{proof}
We can immediately put $\gamma,\delta=0$ in \eqref{rux}. The algebraic rules then become 
\be\nonumber\label{algrul}
&&DE-ED=D+E\\
&&\langle W \vert \alpha(\beta) E =\langle W\vert \\\nonumber
&&  D \sigma(\beta) \vert  V \rangle =\vert V\rangle .
\ee
The probability of a configuration $x=(1,x(2),\hdots , x(N-1),0)$ will in the required limit tend to the limit of
\be\label{eq:eq1}
\rho\left(x\right)= \frac{\langle W \vert D \prod_i^p (D^{N_i}E^{M_i}) E \vert V\rangle}{   \langle W \vert (D+E)^N\vert V\rangle      }
\ee
where the product is ordered from left to right and we have arranged the product of matrices in \eqref{mx} in consecutive blocks of product of $D$ and $E$.
First consider the case where $p=1$: 
\be \nonumber \label{weight1}
\langle W \vert D (D^{N_1}E^{M_1}) E \vert V\rangle &=& \langle W\vert \sum_{k=1} a_{k} D^k +\sum_{k=1} b_k E^k + \sum_{i,j\geq 1} c_{i,j} E^iD^j\vert V\rangle\\
&=&\sum_{k=1}\frac{a_k}{\sigma(\beta)^k}+\sum_{k=1}\frac{b_k}{\alpha(\beta)^k}+\sum_{i,j\geq 1}\frac{c_{i,j}}{\alpha(\beta)^i\sigma(\beta)^j}\langle W\vert V\rangle
\ee
where $a_i,b_i$ and $c_{i,j}$ are the coefficients found when expanding $D^{N_1}E^{M_1}$ using the first equation in (\ref{algrul}). In particular $a_1=b_1>0$. Then the dominant contribution to (\ref{weight1}) comes from the $k=1$ term,
\be
&&\langle W \vert D (D^{N_1}E^{M_1}) E \vert V\rangle = \left(a_1\left(\frac{1}{\alpha(\beta)}+\frac{1}{\sigma(\beta)}\right)+O(e^{-2\beta})\right)\langle W\vert V\rangle
\ee
From equation (57) in \cite{der2}, when $\gamma=\delta=0$, it follows that
\be\label{norma}
\frac{\langle W\vert (D+E)^N \vert V\rangle}{\langle W\vert V\rangle}=\frac{\Gamma(N+\frac{1}{\alpha(\beta)} +\frac{1}{\sigma(\beta)})}{\Gamma(\frac{1}{\alpha(\beta)} +\frac{1}{\sigma(\beta)})}
\ee
where $\Gamma(z)$ is the Gamma function which satisfies $\Gamma(z+1)=z\Gamma(z)$. Let $w(\beta)=\frac{1}{\alpha(\beta)}+\frac{1}{\sigma(\beta)}$, then again to significant order,
\be\nonumber
\log \rho\left(x\right)&=&  \log\left(a_1\frac{w(\beta)\Gamma(w(\beta))}{\Gamma(N+w(\beta))}\right)  = \log\left(a_1\frac{\Gamma(w(\beta)+1)}{\Gamma(N+w(\beta))}\right) \\ 
&=& \log\left(a_1\prod\limits_{i=1}^{N-1} \frac{1}{w(\beta)+i}\right)= \log(a_1)-\sum_{i=1}^{N-1}\log\left( w(\beta)+i\right)
\ee
so that $\lim_{\beta\rightarrow \infty} \frac{1}{\beta} \log \rho(x)=0$. Assume this is true for the ordered product of $p$ such factors, that is 
\be\label{kp}
\lim_{\beta\rightarrow \infty}\frac{1}{\beta}\log\rho(x)=\lim_{\beta\rightarrow \infty}\frac{1}{\beta} \log\left(\frac{\langle W \vert D \prod_i^p (D^{N_i}E^{M_i}) E \vert V\rangle}{   \langle W \vert (D+E)^N\vert V\rangle      }\right)=0.
\ee
For $p+1$ factors we then write
\be \label{kp2}\nonumber
&&\langle W \vert D\left( \prod_{i=1}^p D^{N_i}E^{M_i}\right) (D^{N_{p+1}}E^{M_{p+1}}) E \vert V\rangle = \langle W \vert D \prod_{i=1}^p (D^{N_i}E^{M_i}) (D^{N_{p+1}}E^{M_{p+1}+1}) \vert V\rangle \\ \nonumber
&=&\sum_{k=1} a_k\langle W \vert D  \left(\prod_{i=1}^p D^{N_i}E^{M_i}\right) D^k  \vert V\rangle + \sum_{k=1} b_k\langle W \vert D \left(\prod_{i=1}^p D^{N_i}E^{M_i}\right) E^k  \vert V\rangle   \\ \nonumber
&&+ \sum_{\ell,j\geq 1} c_{\ell,j}\langle W \vert D \left(\prod_{i=1}^p D^{N_i}E^{M_i}\right) E^\ell D^j  \vert V\rangle  \\ \nonumber
&=&\sum_{k=1} \frac{a_k}{\sigma(\beta)^k}\langle W \vert D  \left(\prod_{i=1}^p D^{N_i}E^{M_i}\right)  \vert V\rangle + \sum_{k=1} b_k\langle W \vert D \left(\prod_{i=1}^p D^{N_i}E^{M_i'}\right) \vert V\rangle  \\ 
&&+ \sum_{\ell,j\geq 1} \frac{c_{\ell,j}}{\sigma(\beta)^j} \langle W \vert D \left(\prod_{i=1}^p D^{N_i}E^{M_i''}\right)   \vert V\rangle 
\ee
where $M_i'=M_i''=M_i$ for $i\neq p$, $M_p'=M_p+k$ and $M_p''=M_p+\ell$. The first and last term in (\ref{kp2}) are exponentially smaller than the second term due to the extra $\frac{1}{\sigma(\beta)}$ in the denominator. For large $\beta$ one can therefore neglect both the first and last term. The surviving term is exactly the same as in the nominator of (\ref{kp}). Hence (\ref{kp}) is true for any $p>0$.
\newline
 To show that configurations like in Theorem 6.1 are the only configurations such that $\rho(x)\asymp 1$, consider for instance the following one $x=(0,x(2),\hdots , x(N-1),1)$. Its probability is
\be\label{kp3}
\rho(x)=\frac{\langle W \vert E \prod_i^p (D^{N_i}E^{M_i})D \vert V\rangle}{   \langle W \vert (D+E)^N\vert V\rangle      }
\ee
again using the ordered product.
To proceed by induction, we again consider first the case when $p=1$:
\be \label{kp4}
&&\langle W \vert E (D^{N_1}E^{M_1}) D \vert V\rangle =\frac{1}{\sigma(\beta)\alpha(\beta)} \langle W \vert  D^{N_1}E^{M_1} \vert V\rangle
\ee
We insert the expansion $D^{N_1}E^{M_1}=\sum_k a_kD^k + \sum_k b_k E^k + \sum_{\ell,j}c_{\ell,j}E^\ell D^j$ in (\ref{kp4}). This yields to significant order
\be
\left(\frac{a_1}{\sigma(\beta)\alpha(\beta)}\left(\frac{1}{\alpha(\beta)}+\frac{1}{\sigma(\beta)}\right)+O(e^{-2\beta})\right)\langle W\vert V\rangle.
\ee
Let $w(\beta)=\frac{1}{\alpha(\beta)}+\frac{1}{\sigma(\beta)}$, then again to significant order in $\beta$,
\be\nonumber
\log \rho\left(x\right)&=&  \log\left(a_1\frac{w(\beta)}{\sigma(\beta)\alpha(\beta)}\frac{\Gamma(w(\beta))}{\Gamma(N+w(\beta))}\right)  \\
&=& \log\left(a_1\frac{1}{\sigma(\beta)\alpha(\beta)}\frac{\Gamma(w(\beta)+1)}{\Gamma(N+w(\beta))}\right)
\ee
using the normalisation (\ref{norma}) for the first equality above. Hence,
\be
\lim_{\beta\rightarrow \infty }\frac{1}{\beta}\log \rho\left(x\right)&=&\lim_{\beta\rightarrow \infty } \frac{1}{\beta}\log\left(\frac{a_1}{\sigma(\beta)\alpha(\beta)}\frac{\Gamma(w(\beta)+1)}{\Gamma(N+w(\beta))}\right)=  \mu_R-\mu_L <0.
\ee
Assume that this holds for $p$ products, that is 
\be\label{kp5}
\lim_{\beta\rightarrow \infty}\frac{1}{\beta}\log\rho(x)=\lim_{\beta\rightarrow \infty}\frac{1}{\beta} \log\left(\frac{\langle W \vert E \prod_i^p (D^{N_i}E^{M_i}) D \vert V\rangle}{   \langle W \vert (D+E)^N\vert V\rangle      }\right)<0.
\ee
Then for $p+1$ products
\be \label{kp6}\nonumber
&&\langle W \vert E\left( \prod_{i=1}^p D^{N_i}E^{M_i}\right) (D^{N_{p+1}}E^{M_{p+1}}) D \vert V\rangle =\frac{1}{\sigma(\beta)\alpha(\beta)} \langle W \vert  \prod_{i=1}^p (D^{N_i}E^{M_i}) (D^{N_{p+1}}E^{M_{p+1}}) \vert V\rangle \\ \nonumber
&=&\sum_{k=1} \frac{a_k}{\sigma(\beta)^{k+1}\alpha(\beta)}\langle W \vert   \left(\prod_{i=1}^p D^{N_i}E^{M_i}\right)  \vert V\rangle + \sum_{k=1} \frac{b_k}{\sigma(\beta)\alpha(\beta)}\langle W \vert  \left(\prod_{i=1}^p D^{N_i}E^{M_i'}\right) \vert V\rangle  \\ 
&&+ \sum_{\ell,j\geq 1} \frac{c_{\ell,j}}{\sigma(\beta)^{j+1}\alpha(\beta)} \langle W \vert \left(\prod_{i=1}^p D^{N_i}E^{M_i''}\right)   \vert V\rangle 
\ee
All terms that appear in (\ref{kp6}) decay exponentially to zero. The slowest decaying term is clearly the second one so that for large $\beta$ we can neglect both the first and last. The surviving term is exacly the same as in the nominator of (\ref{kp5}). Hence by induction this is true for all $p$. Note that the difference with expression (\ref{kp2}) is the $\frac{1}{\sigma(\beta)\alpha(\beta)}$ factor that makes (\ref{kp6}) to decay exponentially. \newline
It is very similar to show that the other remaning configurations (such as $\{0,x(1),\hdots , x(N-1),0\}$) decay exponentially to zero as temperature goes to zero.
\end{proof} 
\noindent {\bf Acknowledgments}:  We very much thank Karel Neto\v{c}n\'{y} for the many discussions on this topic.  In particular, WOKdG  is grateful for the hospitality at the Institute of Physics, Academy of Sciences in Prague.

\end{document}